\newcommand{\ofr}{\textsc{Offr}\xspace}
\newcommand{\static}{\textit{static}\xspace}
\newcommand{\ideal}{\textit{ideal}\xspace}
\newcommand{\dynamic}{\textit{dynamic}\xspace}
\newcommand{\lastfm}{\textit{lastfm15k}\xspace}
\newcommand{\batchfw}{\textit{batch-FW}\xspace}
\newcommand{\fairco}{\textit{FairCo}\xspace}
\newcommand{\mlm}{\textit{MovieLens-1m}\xspace}
\newcommand{\muij}{\mu_{ij}}
\newcommand{\mui}{\mu_i}
\newcommand{\muit}{\mu_{\it}}
\newcommand{\sigmat}{\sigma^{(t)}}
\newcommand{\K}{k}
\renewcommand{\k}{r}
\newcommand{\wei}{b}
\newcommand{\ui}{u_i}
\newcommand{\uh}{\hat{u}}
\newcommand{\uhiz}{\uh^{(0)}_i}
\newcommand{\uhi}{\uh_i}
\newcommand{\uhit}{\uh^{(t)}_i}
\newcommand{\uhitt}{\uh^{(t)}_{\it}}
\newcommand{\uhittmo}{\uh^{(t-1)}_{\it}}
\renewcommand{\v}{v}
\newcommand{\vj}{v_j}
\newcommand{\vjp}{v_{j'}}
\newcommand{\vh}{\hat{v}}
\newcommand{\vht}{\vh^{(t)}}
\newcommand{\vhjz}{\vh^{(0)}_j}
\newcommand{\vhjt}{\vh^{(t)}_j}
\newcommand{\vhj}{\vh_j}
\newcommand{\vhtmo}{\vh^{(t-1)}}
\newcommand{\topk}{\mathrm{top\K}}
\newcommand{\arms}{{\mathcal{E}}}
\newcommand{\A}{E}
\renewcommand{\a}{e}
\newcommand{\simpA}{\overline{\arms}}
\newcommand{\w}{w}
\newcommand{\wi}{w_i}
\newcommand{\pis}{\Pi}
\newcommand{\pii}{\pi_i}
\newcommand{\pione}{\pi_1}
\newcommand{\piimo}{\pi_{i{-}1}}
\newcommand{\piipo}{\pi_{i{+}1}}
\newcommand{\pin}{\pi_{n}}
\newcommand{\pipi}{\pi'_i}
\newcommand{\piopt}{\pi^*}
\newcommand{\piij}{\pi_{ij}}
\newcommand{\quaj}{q_j}
\newcommand{\quajp}{q_{j'}}
\newcommand{\quah}{\hat{q}}
\newcommand{\quahj}{\quah_j}
\newcommand{\quaht}{\quah^{(t)}}
\newcommand{\quahtmo}{\quah^{(t-1)}}
\newcommand{\quahjt}{\quah^{(t)}_j}
\newcommand{\quahjz}{\quah^{(0)}_j}
\newcommand{\quaavg}{q_{\mathrm{avg}}}
\newcommand{\quahavgt}{\quah^{(t)}_{\mathrm{avg}}}
\newcommand{\quahavg}{\quah_{\mathrm{avg}}}
\newcommand{\groups}{\mathcal{S}}
\newcommand{\group}{s}%{\mathfrak{s}}
\newcommand{\groupofi}{\group[i]}
\newcommand{\vgj}{\v_{j|\group}}
\newcommand{\vgij}{\v_{j|\groupofi}}
\newcommand{\vavgj}{\v_{j|\mathrm{avg}}}
\newcommand{\vhgj}{\hat{\v}_{j|\group}}
\newcommand{\vhavgg}{\hat{\v}_{\mathrm{avg}|\group}}
\newcommand{\vhgjt}{\hat{\v}^{(t)}_{j|\group}}
\newcommand{\vhavgjt}{\hat{\v}^{(t)}_{j|\mathrm{avg}}}
\newcommand{\vhgij}{\hat{\v}_{j|\groupofi}}
\newcommand{\vhavgj}{\hat{\v}_{j|\mathrm{avg}}}
\newcommand{\wsumg}{\overline{w}_{\group}}
\newcommand{\wsumgofi}{\overline{w}_{\groupofi}}
\newcommand{\wming}{\overline{w}_{\min}}
\newcommand{\welf}{f}
\newcommand{\pdevwi}{g_{w, i}}
\newcommand{\pdevwij}{g_{w, ij}}
\newcommand{\pdevwhti}{g_{\wht, i}}
\newcommand{\pdevwhtij}{g_{\wht, ij}}
\newcommand{\pdevwhij}{g_{\wh, ij}}
\newcommand{\pdevwhtmoi}{g_{\whtmo, i}}
\newcommand{\boundA}{B_\arms}
\renewcommand{\a}{e}
\newcommand{\at}{\a^{(t)}} % -> \jt, \ejt
\newcommand{\atp}{\a^{(\tp)}} % -> \jtp, \ejtp
\newcommand{\atpj}{\a^{(\tp)}_j} % -> \jtp, \ejtp
\newcommand{\aopt}{\tilde{\a}}
\newcommand{\aoptit}{\aopt^{(t)}_i} % \eoptit
\newcommand{\ahit}{\hat{\a}^{(t)}_i} % \ehit
\newcommand{\aoptitpo}{\aopt^{(t+1)}_i} % \eoptitpo
\newcommand{\ahitpo}{\hat{\a}^{(t+1)}_i} % \ehitpo
\newcommand{\tp}{\tau}
\newcommand{\pih}{\pi}%{\hat{\pi}}
\newcommand{\pihz}{\pih_0}
\newcommand{\piht}{\pih^{(t)}}
\newcommand{\pihtpo}{\pih^{(t+1)}}
\newcommand{\pihtmo}{\pih^{(t-1)}}
\newcommand{\ione}{{i^{(1)}}}
\renewcommand{\it}{{i^{(t)}}}
\newcommand{\itpo}{{i^{(t+1)}}}
\newcommand{\itmo}{{i^{(t-1)}}}
\newcommand{\itp}{{i^{(\tp)}}} 
\newcommand{\ct}{c^{(t)}}
\newcommand{\cti}{c^{(t)}_i}
\newcommand{\ctg}{c^{(t)}_{\group}}
\newcommand{\ci}{c_{i}}
\newcommand{\cg}{c_{\group}}
\newcommand{\cgi}{c_{\groupofi}}
\newcommand{\pihit}{\pih^{(t)}_i}
\newcommand{\pihijt}{\pih^{(t)}_{ij}}
\newcommand{\pihij}{\pih_{ij}}
\newcommand{\pihitpo}{\pih^{(t+1)}_i}
\newcommand{\wh}{\hat{\w}}
\newcommand{\wht}{{\wh^{(t)}}}
\newcommand{\whtmo}{{\wh^{(t-1)}}}
\newcommand{\whit}{{\wh^{(t)}_i}}
\newcommand{\whi}{{\wh_i}}
\newcommand{\whitmo}{{\wh^{(t-1)}_i}}
\newcommand{\obj}{f}
\newcommand{\objw}{\obj_\w}
\newcommand{\objwh}{\obj_{\wh}}
\newcommand{\objwpii}{\obj^{\w, \pi}_i}
\newcommand{\objwmax}{\obj_\w^*}
\newcommand{\nablai}{\nabla_i}
\newcommand{\g}{g}
\newcommand{\git}{\g^{(t)}_i}
\newcommand{\gitpo}{\g^{(t+1)}_i}
\newcommand{\gh}{\hat{g}}
\newcommand{\ghi}{\hat{g}_i}
\newcommand{\ghit}{\gh^{(t)}_i}
\newcommand{\ghij}{\gh_{ij}}
\newcommand{\ghitpo}{\gh^{(t+1)}_i}
\newcommand{\ghitt}{\gh^{(t)}_{\it}}
\newcommand{\lipgw}{L}
\newcommand{\lipgwi}{\lipgw_i} %% currently in use
\newcommand{\boundigw}{G_i}
\newcommand{\boundegi}{D_i}
\newcommand{\curvf}{C}
\newcommand{\curvif}{\curvf_i}
\newcommand{\regt}{R^{(t)}}
\newcommand{\regtpo}{R^{(t+1)}}
\newcommand{\pregt}{\overline{R}^{(t)}}
\newcommand{\pregtpo}{\overline{R}^{(t+1)}}
\newcommand{\calCt}{\mathcal{C}^{(t)}}
\newcommand{\calRt}{\mathcal{R}^{(t)}}
\newcommand{\calDt}{\mathcal{D}^{(t)}}
\newcommand{\calGt}{\mathcal{G}^{(t)}}
\newcommand{\caltGt}{\tilde{\mathcal{G}}^{(t)}}
\newcommand{\hgammaitpo}{\hat{\gamma}^{(t+1)}_i}
\newcommand{\gammatpo}{\gamma^{(t+1)}}
\newcommand{\ha}{{\hat{\a}}}
\newcommand{\pihitpot}{\piht_{\itpo}}
\newcommand{\pihitpotpo}{\pihtpo_{\itpo}}
\newcommand{\gitpotpo}{g^{(t+1)}_{\itpo}}
\newcommand{\ctitpo}{c^{(t)}_\itpo}
\newcommand{\objwpihtitpo}{\obj^{\w,\piht}_{\itpo}\!}
\newcommand{\curvitpof}{C_{\itpo}}
\newcommand{\ahitpotpo}{\hat{\a}^{(t+1)}_{\itpo}}
\newcommand{\gitp}{g^{(\tp)}_i}
\newcommand{\ghitp}{\gh^{(\tp)}_i}
\newcommand{\groupone}{\group^{(1)}}
\newcommand{\groupt}{\group^{(t)}}
\newcommand{\grouptp}{\group^{(\tp)}}
\newcommand{\xg}{x_{\group}}
\newcommand{\xgi}{x_{\groupofi}}
\newcommand{\jp}{j'}
\newcommand{\vhjp}{\vh_{\jp}}
\newcommand{\quahjp}{\quah_{\jp}}
\newtheorem{theorem}{Theorem}
\newtheorem{proposition}[theorem]{Proposition}
\newtheorem{lemma}[theorem]{Lemma}
\newtheorem{corollary}[theorem]{Corollary}
\title{Fast online ranking with fairness of exposure}
\author{
    Nicolas Usunier, Virginie Do, Elvis Dohmatob \\
    Meta AI\\
    \texttt{\{usunier,virginiedo,dohmatob\}@fb.com}
}
\begin{document}

\maketitle

%%
%% The "title" command has an optional parameter,
%% allowing the author to define a "short title" to be used in page headers.

%%
%% The "author" command and its associated commands are used to define
%% the authors and their affiliations.
%% Of note is the shared affiliation of the first two authors, and the
%% "authornote" and "authornotemark" commands
%% used to denote shared contribution to the research.

% \author{Nicolas Usunier}
% \email{usunier@fb.com}
% \author{Virginie Do}
% \email{virginiedo@fb.com}
% \author{Elvis Dohmatob}
% \email{dohmatob@fb.com}
% \affiliation{%
%   \institution{Meta AI}
% %   \streetaddress{we probably do not care}
%   \city{Paris}
% %   \state{}
%   \country{France}
% %   \postcode{don't care either}
% }

%%
%% By default, the full list of authors will be used in the page
%% headers. Often, this list is too long, and will overlap
%% other information printed in the page headers. This command allows
%% the author to define a more concise list
%% of authors' names for this purpose.
% \renewcommand{\shortauthors}{Usunier et al.}

% \newcommand{\nico}[1]{{\color{blue} #1}}
% \newcommand{\virginie}[1]{{\color{violet} V: #1}}

% \definecolor{midgreen}{rgb}{0.1,0.5,0.1}
% \newcommand{\elvis}[1]{{\color{midgreen} [Elvis: {#1}]}}

\begin{abstract}

As recommender systems become increasingly central for sorting and prioritizing the content available online, they have a growing impact on the opportunities or revenue of their items producers. For instance, they influence which recruiter a resume is recommended to, or to whom and how much a music track, video or news article is being exposed. This calls for recommendation approaches that not only maximize (a proxy of) user satisfaction, but also consider some notion of fairness in the exposure of items or groups of items. Formally, such recommendations are usually obtained by maximizing a concave objective function in the space of randomized rankings. When the total exposure of an item is defined as the sum of its exposure over users, the optimal rankings of every users become coupled, which makes the optimization process challenging. Existing approaches to find these rankings either solve the global optimization problem in a \textit{batch} setting, i.e., for all users at once, which makes them inapplicable at scale, or are based on heuristics that have weak theoretical guarantees. In this paper, we propose the first efficient \textit{online} algorithm to optimize concave objective functions in the space of rankings which applies to every concave and smooth objective function, such as the ones found for fairness of exposure. Based on online variants of the Frank-Wolfe algorithm, we show that our algorithm is \textit{computationally fast}, generating rankings on-the-fly with computation cost dominated by the sort operation, \textit{memory efficient}, and has \textit{strong theoretical guarantees}.
%, with a regret that decreases as $O(1/\sqrt{t})$ where $t$ is the number of time steps. In other words, 
Compared to baseline policies that only maximize user-side performance, our algorithm allows to incorporate complex fairness of exposure criteria in the recommendations %policies 
with negligible computational overhead. We present experiments on artificial music and movie recommendation tasks using Last.fm and MovieLens datasets which suggest that in practice, the algorithm rapidly reaches good performances on three different objectives representing different fairness of exposure criteria.

\end{abstract}

\keywords{fairness, recommender systems, online ranking}

%%
%% This command processes the author and affiliation and title
%% information and builds the first part of the formatted document.

%% Nico: put brackets around inputs so that each section can define 
%% its own newcommands without affecting other sections
{
\section{Introduction}

Recommender systems are ubiquitous in our lives, from the prioritization of content in news feeds to matching algorithms for dating or hiring.
%, to the recommendations of cultural content such as books, music, or movies on dedicated platforms. 
%
The objective of recommender systems is traditionally formulated as maximizing a proxy for user satisfaction such as ranking performance. However, it has been observed that these recommendation strategies can have undesirable side effects. For instance, several authors discussed popularity biases and winner-take-all effects that may lead to disproportionately expose a few items even if they are assessed as only slightly better than others \citep{abdollahpouri2019unfairness,singh2018fairness,biega2018equity}, or disparities in content recommendation across social groups defined  by sensitive attributes \citep{sweeney2013discrimination,imana2021auditing}. An approach to mitigate these undesirable effects is to take a more general perspective to the objective of recommendation systems. Considering recommendation as an allocation problem \citep{singh2018fairness,patro2020fairrec} in which the ``resource'' is the exposure to users, the objective of recommender systems is to allocate this resource fairly, i.e., by taking into account the interests of the various stakeholders -- users, content producers, social groups defined by sensitive attributes -- depending on the application context. This perspective yields the traditional objective of recommendation when only the ranking performance averaged over individual users is taken into account.

There are two main challenges associated with the fair allocation of exposure in recommender systems. The first challenge is the specification of the formal objective function that defines the trade-off between the possibly competing interests of the stakeholders in a given context. The second challenge is the design of a scalable algorithmic solution: when considering the exposure of items across users in the objective function, the system needs to account for what was previously recommended (and, potentially, to whom) when generating the recommendations for a user. This requires solving a global optimization problem in the space of the rankings of all users. In contrast, traditional recommender systems simply sort items by estimated relevance to the user, irrespective of what was recommended to other users.

In this paper, we address the algorithmic challenge, with a solution that is sufficiently general to capture many objective functions for ranking with fairness of exposure, leaving the choice of the exact objective function to the practitioner. Following previous work on fairness of exposure, we consider objective functions that are concave functions that should be optimized in the space of randomized rankings \citep{singh2018fairness,singh2019policy,morik2020controlling,do2021two}. Our algorithm, \ofr (\texttt{O}nline \texttt{F}rank-Wolfe for \texttt{F}air \texttt{R}anking), is a computationally efficient algorithm that optimizes such objective functions \textit{online}, i.e., by generating rankings on-the-fly as users request recommendations. The algorithm dynamically modifies item scores to optimize for both user utility and the selected fairness of exposure objective. We prove that the objective function converges to the optimum in $O(1/\sqrt{t})$, where $t$ is the number of time steps. The computational complexity of \ofr at each time step is dominated by the cost of sorting, and it requires only $O(\#users +\#items)$ storage. The computation cost of \ofr are thus of the same order as what is required in traditional recommenders systems. Consequently, using \ofr, taking into account fairness of exposure in the recommendations is (almost) free.
Our main technical insight is to observe that in the context of fair ranking, the usage of Frank-Wolfe algorithms \citep{frank1956algorithm} resolves two difficulties:
%at the same time:
\begin{enumerate}[leftmargin=*]
    \item Frank-Wolfe algorithms optimize in the space of probability distributions but use at each round a deterministic outcome as the update direction. In our case, it means that \ofr outputs a (deterministic) ranking at each time step while implicitly optimizing in the space of randomized rankings.
    \item Even though the space of rankings is combinatorial, the objective functions used in fairness of exposure have a linear structure that Frank-Wolfe algorithms can leverage, as already noticed by \citet{do2021two}.
\end{enumerate}

Compared to existing algorithms, \ofr is the first widely applicable and scalable algorithm for fairness of exposure in rankings. Existing online ranking algorithms for fairness of exposure \citep{morik2020controlling,biega2018equity,yang2021maximizing} are limited in scope as they apply to only a few possible fairness objectives, and only have weak theoretical guarantees. \citet{do2021two} show how to apply the Frank-Wolfe algorithm to general smooth and concave objective functions for ranking. However, they only solve the problem in a \textit{batch} setting, i.e., computing the recommendations of all users at once, which makes the algorithm impractical for large problems, because of both computation and memory costs. Our algorithm can be seen as an online variant of this algorithm, which resolves all scalability issues.

We showcase the generality of \ofr on three running examples of objective functions for fairness of exposure. The first two objectives are welfare functions for two-sided fairness \citep{do2021two}, and the criterion of quality-weighted exposure \citep{singh2018fairness,biega2018equity}. The third objective, which we call \textit{balanced exposure to user groups}, is novel. Taking inspiration from audits of job advertisement platforms \citep{imana2021auditing}, this objective considers maximizing ranking performance while ensuring that each item is evenly exposed to different user groups defined by sensitive attributes.

In the remainder of the paper, we present the recommendation framework and the different fairness objectives we consider in the next section. In Sec.~\ref{sec:online}, we present our online algorithm in its most general form, as well as its regret bound. In Sec.~\ref{sec:algorithms}, we instantiate the algorithm on three fairness objectives and provide explicit convergence rates in each case. We present our experiments in Sec.~\ref{sec:xp}. We discuss the related work in Sec.~\ref{sec:relatedwork}. Finally, in Sec.~\ref{sec:discussion}, we discuss the limitations of this work and avenues for future research. 
}

{
\section{Fairness of exposure in rankings}
\label{sec:framework}

This paper addresses the online ranking problem, where users arrive one at a time and the recommender system produces a ranking of $k$ items for that user. We focus on an abstract framework where the recommender system has two informal goals. First, the recommended items should be relevant to the user. Second, the exposure of items should be distributed ``fairly'' across users, for some definition of fairness which depends on the application context. We formalize these two goals in this section, by defining objective functions composed of a weighted sum of two terms: the \textit{user objective} which depends on the ranking performance from the user's perspective, and the \textit{fairness objective}, which depends on the exposure of items. In this section, we focus on the \ideal objective functions, which are defined in a \static ranking framework. In the next sections, we focus on the online ranking setting, where at each time step, an incoming user requests recommendations and the recommender systems produces the recommendation list on-the-fly while optimizing these ideal objective functions. 

In order to disentangle the problem of learning user preferences from the problem of generating fair recommendations, we consider that user preferences are given by an oracle. We start this section by describing the  recommendation framework we consider. We then present the fairness objectives we focus on throughout the paper. 

\paragraph{Notation} Integer intervals are denoted within brackets, i.e., $\forall n\in\Nat$, $\intint{n}=\{1, .., n\}$. 
%The $(n-1)$-dimensional probability simplex is denoted by $\simpn$. 
We use the Dirac notation $\dotp{x}{y}$ for the dot product of two vectors of same dimension $x$ and $y$. Finally, $\indic{{\rm expr}}$ is $1$ when  ${\rm expr}$ is true, and $0$ otherwise.

\subsection{Recommendation framework}

We consider a recommendation problem with $n$ users and $m$ items. We identify the set of users with $\intint{n}$ and the set of items with $\intint{m}$. We denote by $\muij \in[0,1]$ the value of recommending item $j$ to user $i$ (e.g., a rating normalized in $[0,1]$). To account for the fact that users are more or less frequent users of the platform, we define the \textit{activity} of user $i$ as a weight $\wi\in[0,1]$. We consider that $\w=(\w_1, ..., \w_n)$ is a probability distribution, so that in the online setting described later in this paper, $\w_i$ is the probability that the current user at a given time step is $i$.

The recommendation for a user is a \textit{top-$\K$ ranking} (or simply \textit{ranking} when the context is clear), i.e., a sorted list of $\K$ unique items, where typically $\K\ll m$. Formally, we represent a ranking by a mapping $\sigma:\intint{\K}\rightarrow\intint{m}$ from ranks to recommended items with the constraint that different ranks correspond to different items. 
%
% A \emph{top-$\K$} ranking $\sigma:\intint{k}\rightarrow\intint{m}$ maps ranks in $\intint{\K}$ to items in $\intint{m}$. Naturally, a single item appears at most one time in a \emph{top-$\K$} ranking. In the remainder, to lighten the text, we drop the prefix "top-$\K$" when the context is clear. 
%
The ranking performance on the user side follows the \textit{position-based} model, similarly to previous work \citep{singh2018fairness,singh2019policy,patro2020fairrec,biega2018equity,do2021two}. Given a set of non-negative, non-increasing \textit{exposure weights} $\wei=(\wei_1, ..., \wei_\K)$, the ranking performance of $\sigma$ for user $i$, denoted by $\ui(\sigma)$, is equal to:
\begin{align}\label{eq:position_based}
    %  \text{\textit{(position-based model of user-side ranking performance)}}&&
     \ui(\sigma) =  \sum_{\k=1}^{\K} \mu_{i,\sigma(\k)}\wei_\k && \text{with $\wei_1 \geq ... \geq\wei_\K\geq 0$}.
\end{align}
We use the shorthand \textit{user utility} to refer to $\ui$.
% \footnote{The term \textit{utility} sometimes refers to subjective mental state of individuals in the utilitarian literature. Our usage of \textit{utility} follows its broader definition in cardinal welfare economics as a ``measurement of the higher-order characteristic that is relevant to the particular distributive justice problem at hand'' \citep[p. 24]{moulin2003fair}. The utility is our proxy to user satisfaction, which in practice is partly based on historical data such as clicks, likes, shares, comments and so on.} 
%
Following previous work on fairness of exposure, we interpret the weights in $\wei$ as being commensurable to the exposure an item receives given its rank. The weights are non-increasing to account for the \textit{position bias}, which means that the user attention to an item decreases with the rank of the item. 
Given a top-$\K$ ranking $\sigma$, the \textit{exposure vector induced by $\sigma$}, denoted by $\A(\sigma)\in\Re^m$ assigns each item to its exposure in $\sigma$:
\begin{align}
    % \Big(\multiliner{7em}{\textit{exposure vector induced by $\sigma$}} \Big ) && 
    \forall j\in\intint{m}, \A_j(\sigma) = \begin{cases}
    b_{\k}&\text{if~}\exists \k\in\intint{\K}, \sigma(\k)=j\\
    0&\text{otherwise}
    \end{cases}.
\end{align}
The user utility is then equal to $
    \ui(\sigma) = \sum\limits_{j=1}^m \muij\A_j(\sigma) = \dotp{\mui}{\A(\sigma)}$.

In practice, the ranking given by a recommender system to a user is not necessarily unique: previous work in \static rankings consider randomization in their rankings \citep{singh2018fairness}, while in our case of online ranking, it is possible that the same user receives different rankings at different time steps. In that case, we are interested in averages of user utilities and item exposures.
To formally define these averages, we use the notation: 
\begin{equation}
\begin{aligned}
    \arms&=\xset[\big]{\A(\sigma): \sigma \text{~is a~top-$\K$ ranking}}\\ \simpA &= \mathrm{convexhull}(\arms) & \pis &= \simpA^{n}.
    \end{aligned}
\end{equation}
$\arms$ is the set of possible item exposures vectors and $\simpA$ is the set of possible average exposure vectors. $\pis$ is an \textit{exposure matrix}, where $\piij$ is the average exposure of item $j$ to user $i$. 
Under the position-based model, a matrix $\pi\in\pis$ characterizes a recommender system since it specifies the average exposure of every item to every user. We use $\pi$ as a convenient mathematical device to study the optimization problems of interests, keeping in mind that out algorithms effectively produce a ranking at each time step.

Recalling that $\w$ represents the user activities, the user utilities and total item exposures under $\pi$ are defined as
\begin{equation}
\begin{aligned}\label{eq:position_based model_exposures}
    \text{(utility of user $i$)} && \ui(\pi) = \dotp{\mui}{\pii}\\ \text{(exposure of item $j$)}&& \vj(\pi) = \sum_{i=1}^n \wi \piij.
    \end{aligned}
\end{equation}
Fairness of exposure refers to objectives in recommender systems where maximizing average user utility is not the sole or main objective of the system. Typically, the exposure of items $\vj$, or variants of them, should also be taken into account in the recommendation. We formulate the goal of a recommender system as optimizing an objective function $\welf(\pi)$ over $\pi\in\pis$, where $\welf$ accounts for both the user utility and the fairness objectives. 
% We give in the next subsection three examples that we use throughout the paper. We chose these three examples because they showcase the generality of the approach, while relying on similar principles so that the presentation is kept simple.

\subsection{Fairness Objectives}\label{sec:fairness_objectives}

We now present our three examples of objective functions $\welf(\pi)$ in order of ``difficulty'' to perform online ranking compared to static ranking. 
%The principles we outline in this paper are far more general (see Sec.~\ref{sec:discussion} for more discussion on this topic). 
%
In all three cases, it is easy to see that the objective functions are \textit{concave} with respect to the recommended exposures $\pi$. The objective functions should be maximized, so the optimal exposures $\piopt$ satisfy
\begin{equation}
    \pi^* \in \argmax_{\pi\in\pis} \welf(\pi).
\end{equation}
Since our algorithm works on any concave function of the average exposures respecting some regularity conditions, we emphasize that the three objective functions below are only a few examples among many. 

%We present the three examples below, and we discuss in the next section how to achieve near-optimal values of these objectives in online ranking.

\paragraph{Two-sided fairness} The first example is from \citet{do2021two} who optimize an additive concave welfare function of user utilities and item exposures. Interpreting item exposure as the utility of the item's producer, this approach is grounded into notions of distributive justice from welfare economics and captures both user- and item-fairness \citep{do2021two}. 
For $\eta>0$, $\beta>0$ and $\alpha_1\in (-\infty, 1), \alpha_2\in(-\infty, 1)$, the objective function is:
\begin{equation}
\begin{aligned}
    \label{eq:welfobj}
    \welf(\pi) &= \sum_{i=1}^n \w_i\psi_{\alpha_1}\!\big(\ui(\pi)\big) + \frac{\beta}{m} \sum_{j=1}^m \psi_{\alpha_2}\!\big(\vj(\pi)\big) 
    \\
    \text{where~} \psi_{\alpha}(x) &= \begin{cases}
    {\rm sign}(\alpha)(\eta+x)^\alpha&\text{~if~} \alpha\neq 0\\
    \log(\eta+x)&\text{~if~}\alpha=0
    % \\
    % -(\eta+x)^\alpha&\text{~if~}\alpha<0
    \end{cases}.
    \end{aligned}
\end{equation}

Where $\eta>0$ avoids infinite derivatives at 0, $\beta>0$ controls the relative weight of user-side and item-side objectives, and $\alpha_1<1$ (resp. $\alpha_2<1$) controls how much we focus on maximizing the utility of the worse-off users (resp. items) \citep{do2021two}.

\paragraph{Quality-weighted exposure} One of the main criteria for fairness of exposure is \emph{quality-weighted} exposure \citep{biega2018equity,wu2021tfrom} (also called merit-based fairness \citep{singh2018fairness,morik2020controlling}). A measure $\quaj$ of the overall quality of an item is taken as reference, and the criterion stipulates that the item exposure is proportional to its quality. $\quaj$ is often defined as the average value $\muij$ over users. Using this definition of $\quaj$, as noted by \citet{do2021two}, it is possible to optimize trade-offs between average user utility and proportional exposure using a penalized objective of the form:
\begin{equation}
\begin{aligned}\label{eq:quaobj}
    &\welf(\pi) = \sum_{i=1}^n \w_i\ui(\pi) - \beta \sqrt{\eta+\frac{1}{m}\sum_{j=1}^m \Big(\quaavg\vj(\pi)-\quaj\norm{\wei}_1\Big)^2} \\
    &\text{~where~} \quaj = \sum_{i=1}^n \wi\muij \text{~~and~~} \quaavg=\frac{1}{m}\sum_{j=1}^m\quaj.
\end{aligned}
\end{equation}
As before, $\beta>0$ controls the trade-off between user utilities and the fairness of exposure penalty and $\eta>0$ avoids infinite derivatives at $0$. This form of the exposure penalty was chosen because it is concave and differentiable, and it is equal to zero when exposure is exactly proportional to quality, i.e., when $\forall j, j', \frac{\vj}{\quaj} = \frac{\vjp}{\quajp}$. We use $\quaavg\vj(\pi)-\quaj\norm{\wei}_1$ rather than 
%the maybe more natural 
$\frac{\vj(\pi)}{\quaj} -\frac{\norm{\wei}_1}{\quaavg}$ because the the former is more stable when qualities are close to $0$ or estimated.

\paragraph{Balanced exposure to user groups} 
%While the two previous objectives have already been proposed in the context of fairness of exposure, 
We also propose to study a new criterion we call \textit{balanced exposure to user groups}, which aims at exposing every item evenly across different user groups. For instance, a designer of a recommendation system might want to ensure a job ad is exposed to the similar proportion of men and women \citep{imana2021auditing}, or to even proportions within each age category. 
Let $\groups = (\group_1, ..., \group_{\card{\groups}})$ be a set of non-empty groups of users. We do not need $\groups$ to contain all users, and groups may be overlapping. Let $\vgj$ be the exposure of item $j$ within the group $\group$, i.e., the amount of exposure $j$ receives in group $\group$ with respect to the total exposure available for this group. That is, for any $\pi \in \pis$, define
\begin{equation}\nonumber
\vgj(\pi):=\sum_{i\in\group} \frac{\wi}{\wsumg} \piij,\text{ with }\wsumg := \sum_{i\in\group} \wi.
\quad
\vavgj = \frac{1}{\card{\groups}}\sum_{\group\in\groups} \vgj(\pi)
\end{equation}
Also, let $\vavgj := (1/\card{\groups})\sum_{\group\in\groups} \vgj(\pi)$ be the average exposure for item $j$, across all the groups.
The objective function we consider takes the following form, where $\beta>0$ and $\eta>0$ play the same roles as before:
\begin{align}\label{eq:balancedobj}
    % \!\!\resizebox{0.93\linewidth}{!}{
    % $\displaystyle
    \welf(\pi) = \sum_{i=1}^n \w_i \ui(\pi) - \frac{\beta}{m} \sum_{j=1}^m \sqrt{\eta + \sum_{\group\in\groups} \Big(\vgj(\pi)-\vavgj(\pi)\Big)^2}.
    % $
    % }
\end{align}
}

{
\section{Fast online ranking}\label{sec:online}

\subsection{Online ranking}
The \emph{online setting} we consider is summarized as follows. At each time step $t\geq 1$:
\begin{enumerate}
\item 
A user $\it\in\intint{n}$ asks for recommendations. We assume $\it$ is drawn at random from the fixed but unknown distribution of user activities with parameters $\w$, i.e., $\it\sim{\rm Categorical}(\w)$.
\item The recommender system picks a ranking $\sigmat$.
\end{enumerate}
Note that as stated before, the main assumptions of this framework are the fact that incoming users are sampled independently at each step from a distribution that remains constant over time. In our setting, we consider that the (user, item) values $\muij$ are known to the system. However, the user activities $\wi$ are unknown.

% Please see Section~\ref{sec:discussion} for more discussion on this topic. 

Let $\at=\A(\sigmat)$ be the exposure vector induced by $\sigmat$, and define, for every user $i$:
\begin{itemize}
    \item The user counts at time $t$:  $\displaystyle\cti = \sum_{\tp \le t}\indic{\itp=i}$;
    \item The average exposure at time $t$: 
    $\displaystyle \pihit=\frac{1}{\cti}\sum_{\tp \le t}\indic{\itp=i} \atp$.
\end{itemize}
% \begin{align}\label{eq:def:pih}
%     % \Big(\substack{\displaystyle\text{\textit{user count }}\\\displaystyle\text{\textit{up to time $t$}}}\Big)
%     \Big(\multiliner{5em}{\textit{user counts \\at time $t$}}\Big)
%     &&  \cti = \sum_{\tp \le t}\indic{\itp=i} && \Big(\multiliner{8em}{\textit{ avg. exposure for user $i$ at time $t$}}\Big) && \pihit=\frac{1}{\cti}\sum_{\tp \le t}\indic{\itp=i} \atp
% \end{align}

Given an objective function $\welf$ such as the ones defined in the previous section, our goal is to design computationally efficient algorithms with low \textit{regret} when $t$ grows to infinity. More formally the goal of the algorithm is to guarantee:
\begin{align}\label{eq:regret}
    \regt = \max_{\pi\in\pis}\big[\welf(\pi)\big] - \expect[\welf(\piht)] \xrightarrow[t\to\infty]{} 0
    %= O\Big(\frac{1}{\sqrt{t}}\Big),
\end{align}
where the expectation in $\regt$ is taken over the random draws of $\ione, ..., \it$ and the $O(.)$ hides constants that depend on the problem, such as the number of users or items.

\subsection{The \ofr algorithm}

We describe in this section our generic algorithm, called \ofr for \texttt{O}nline \texttt{F}rank-Wolfe for \texttt{F}air \texttt{R}anking. 
\ofr works with an abstract objective function 
% $\obj:(\w,\pi)\in\simpn\times\Pi\mapsto \obj(\w,\pi)\in\Re$, where $\simpn$ is the probability simplex over $\intint{n}$ (user indexes). Notice that we make the objective function explicitly depend on the user activities $\w$ to better describe the practical algorithms later. We denote by
$\objw:\pis\rightarrow\Re$, which is parameterized by the vector of user activities $\w$. The $\objw(\pi)$ of this section is exactly the $\welf(\pi)$ of the previous section, except that we make explicit the dependency on $\w$ because it plays an important role in the algorithm and its analysis.

\paragraph{Assumptions on $\w$ and $\objw$} In the remainder, we assume $\w$ is fixed and non-degenerate, i.e., $\forall i\in\intint{n}, \wi>0$. We assume that for every $\w$, $\pi\mapsto\objw(\pi)$ is \textit{concave} and \textit{differentiable}. More importantly, the fundamental object in our algorithm are the partial derivatives of $\obj$ with respect to $\pi_i$ normalized by $\wi$. Given a user index $i$, let 
\begin{align}
    \pdevwi(\pi) = \frac{1}{\wi}\frac{\partial \objw}{\partial  \pi_i}(\pi) \in \Re^m.
\end{align}
We assume that $\pdevwi$ is bounded and Lipschitz with respect to $\pi_i$: for every $\pi\in\simpA^n$ and every $\pipi\in\simpA$, we have:
\begin{itemize}
    \item Bounded gradients: $\norm{\pdevwi(\pi)}_\infty \leq \boundigw$;
    \item Lipschitz gradients: 
\end{itemize}
\begin{equation*}
    \norm[\big]{\pdevwi(\pi) - \pdevwi(\pione, \ldots, \piimo, \pipi, \piipo, \ldots, \pin)}_2 \leq \lipgwi\norm{\pii-\pipi}_2.
\end{equation*}
Notice that with the normalization by $\wi$ in $\pdevwi$, these assumptions guarantee that the importance of a user is commensurable with their activity, i.e., that the objective does not depend disproportionately on users we never see.
\paragraph{Online Frank-Wolfe with an approximate gradient} 
Our algorithm is described by the following rule for choosing $\at$. First, we rely on $\ghit$  which is an approximation of the gradient $\pdevwi(\piht)$. We describe later the required properties of the approximation we need is given (see Th.~\ref{thm:boundgeneral} below), and the approximation one we use in practice (see \eqref{eq:which_approximate_gradient} below). Notice that we rely on an approximation because the user activities are unknown. Then, choose $\at$ as:
\begin{align}\label{eq:fwmain}
    \at\in\argmax_{\a\in\arms} \dotp{\ghitt}{\a}.
\end{align}
Since we compute a maximum dot product with a gradient (or an approximation thereof), our algorithm is a variant of online Frank-Wolfe algorithms. We discuss in more details the relationship with this literature in Sec.~\ref{sec:relatedwork}. 

Frank-Wolfe algorithms shine when the $\argmax$ in \eqref{eq:fwmain} can be computed efficiently. As previously noted by \citet{do2021two} who only study \static ranking, Frank-Wolfe algorithms are particularly suited to ranking because \eqref{eq:fwmain} only requires a top-$\K$ sorting. Let $\topk(x)$ be a routine that returns the indices of $\K$ largest elements in vector $x$.\footnote{Formally, $\sigma=\topk(x) \implies \Big(x_{\sigma(1)} \geq ... \geq x_{\sigma(\K)}$ and $\forall j\not\in\xset[\big]{\sigma(1), ..., \sigma(K)}, x_j \leq x_{\sigma(\K)}\Big)$, using an arbitrary tie breaking rule as it does not play any role in the analysis.} We have:  
\begin{proposition} 
\protect{\citep[Thm. 1]{do2021two}}
\label{prop:sort}
\begin{equation*}\sigmat = \topk\big(\ghit\big) \implies \A\big(\sigmat\big) \in \argmax_{\a\in\arms} \dotp{\ghit}{\a}.
\end{equation*}
\end{proposition}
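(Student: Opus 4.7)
The plan is to prove Proposition~\ref{prop:sort} by an exchange argument based on the rearrangement inequality, exploiting the special structure of the exposure vectors $\A(\sigma)$ and the fact that the weights $\wei_1 \geq \dots \geq \wei_\K \geq 0$ are non-negative and non-increasing.

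First, I would rewrite the objective in a convenient form. By definition of the exposure induced by a ranking, for any top-$\K$ ranking $\sigma$ we have
\begin{equation*}
    \dotp{\ghit}{\A(\sigma)} = \sum_{j=1}^m \ghij \A_j(\sigma) = \sum_{\k=1}^\K \wei_\k \, \ghit_{\sigma(\k)},
\end{equation*}
since $\A_j(\sigma)$ equals $\wei_\k$ if $j = \sigma(\k)$ and $0$ otherwise. The problem $\max_{\a\in\arms} \dotp{\ghit}{\a}$ is thus equivalent to choosing a ranking $\sigma$ maximizing $\sum_{\k=1}^\K \wei_\k \ghit_{\sigma(\k)}$, over all injections $\sigma : \intint{\K} \to \intint{m}$.

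Second, I would argue in two stages. (i) Fix the \emph{set} $S = \{\sigma(1), \ldots, \sigma(\K)\}$ of items in the ranking. Among permutations of $S$ assigned to positions $1, \ldots, \K$, the rearrangement inequality says that the sum $\sum_{\k=1}^\K \wei_\k \ghit_{\sigma(\k)}$ is maximized when the items are sorted so that $\ghit_{\sigma(1)} \geq \ldots \geq \ghit_{\sigma(\K)}$ (both sequences sorted in the same order), because $(\wei_\k)_\k$ is non-increasing. (ii) Now optimize over $S$: since $\wei_\K \geq 0$, swapping any item $j \in S$ with an item $j' \notin S$ such that $\ghit_{j'} > \ghit_j$ strictly increases the objective (if $\wei_\K > 0$) or does not decrease it (if $\wei_\K = 0$); hence an optimal $S$ is a set of $\K$ indices with largest $\ghit$ values. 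Combining (i) and (ii), any $\sigma$ satisfying $\ghit_{\sigma(1)} \geq \ldots \geq \ghit_{\sigma(\K)}$ and $\ghit_{\sigma(\K)} \geq \ghit_j$ for all $j \notin \{\sigma(1), \ldots, \sigma(\K)\}$ is optimal. This is exactly the property characterizing $\topk(\ghit)$ given in the footnote.

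I do not expect any real obstacle here: the argument is essentially the classical rearrangement inequality applied to an injection rather than a permutation, with the extra step of selecting the support $S$. The only points that require mild care are the handling of ties (handled by the arbitrary tie-breaking rule mentioned in the footnote, which does not affect the value of the dot product) and the edge case $\wei_\K = 0$ where several rankings may be simultaneously optimal; in all cases, the ranking produced by $\topk(\ghit)$ achieves the maximum, which is what the statement requires.
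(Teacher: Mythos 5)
Your argument is correct: rewriting $\dotp{\ghit}{\A(\sigma)}=\sum_{\k=1}^{\K}\wei_\k\,\ghit_{\sigma(\k)}$ and combining the rearrangement inequality (for the ordering within the selected set) with a non-negativity-of-$\wei_\k$ exchange step (for the choice of the set) is exactly the standard proof of this fact, and your handling of ties and of the degenerate case $\wei_\K=0$ is adequate. Note that the paper itself does not reprove this proposition but imports it from the cited reference, so there is nothing internal to compare against; your self-contained derivation fills that gap correctly.
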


We call \ofr (Online Frank-Wolfe for Fair Ranking) the usage of the online Frank-Wolfe update \eqref{eq:fwmain} in ranking tasks, i.e., using Prop. \ref{prop:sort} to efficiently perform the $\argmax$ computation of \eqref{eq:fwmain}.

We are now ready to state our main result regarding the convergence of \dynamic ranking. The result does not rely on the specific structure of ranking problems. The result below is valid as long as $\arms\subset \Re^m$ is a finite set with $\forall \a\in\arms,  0\leq \a_j\leq 1$. We denote by $\boundA = \max_{\a\in\arms} \norm{\a}_1$ ($\boundA = \norm{\wei}_1$ in our case). 
\begin{theorem}[Convergence of \eqref{eq:fwmain}]
\label{thm:boundgeneral} 
Let $\pihz\in\simpA^n$, and assume there exists $\boundegi$ such that  $\forall t\geq 1$ and $\forall i\in\intint{n}$, we have:
\begin{align}\label{eq:approximategradient}
    %\text{\emph{(approximate normalized gradient)}} && 
    % \forall i\in\intint{n}, \forall t\geq 1:
    % ~~~
    \expect\Big[
        \norm[\Big]{\ghit - \pdevwi\big(\pihtmo\big)}_\infty
    \Big] \leq \frac{\boundegi}{\sqrt{t}},
\end{align}
where the expectation is taken over $\ione, \ldots, \itmo$. 

Then, with $\at$ chosen by \eqref{eq:fwmain} at all time steps, we have $\forall t\geq 1$:
\begin{align} 
   \regt \leq 2\boundA\sum_{i=1}^n (\lipgwi+ \boundigw)\frac{\ln(e t)}{t} + \frac{6\boundA \sum_{i=1}^n \sqrt{\w_i}(\boundigw+\boundegi)}{\sqrt{t}}
   \label{eq:generalregret}
\end{align}
\end{theorem}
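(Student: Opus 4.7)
The plan is to treat~\eqref{eq:fwmain} as an online Frank-Wolfe algorithm that updates a single coordinate block per step (the block indexed by the active user $\it$), and to turn per-step progress bounds into a recursion on $\regt$. The ingredients I would combine are concavity of $\objw$ (linking regret to the FW gap), the Lipschitz-gradient assumption (quadratic smoothness error), the sort-based $\argmax$ identity of Prop.~\ref{prop:sort}, and the $\ell_\infty$ bound~\eqref{eq:approximategradient} on the gradient estimate.

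\textbf{Step 1 (per-step progress).} Because only the $\it$-th block changes, via $\pihitt = \pihittmo + \tfrac{1}{\cti}(\at - \pihittmo)$, the Lipschitz-gradient assumption combined with $\nabla_i\objw = \wi\pdevwi$ yields a lower bound of the form
\[
\objw(\piht) - \objw(\pihtmo) \;\geq\; \frac{\w_\it}{\cti}\,\dotp{g_{w,\it}(\pihtmo)}{\at - \pihittmo} - \frac{L_\it\,\boundA^2}{2\,(\cti)^2}.
\]
\textbf{Step 2 (FW gap via the sort rule and gradient approximation).} Prop.~\ref{prop:sort} gives $\dotp{\ghitt}{\at} \geq \dotp{\ghitt}{\piopt_\it}$. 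Swapping $\ghitt$ for the true normalized gradient $g_{w,\it}(\pihtmo)$ incurs a Hölder error at most $\boundA\,\norm{\ghitt - g_{w,\it}(\pihtmo)}_\infty$, whose expectation over $\ione,\ldots,\itmo$ is bounded by $\boundA\,\boundegi/\sqrt{t}$ via~\eqref{eq:approximategradient}. Concavity then supplies $\welf(\piopt) - \welf(\piht) \leq \sum_i\wi\,\dotp{\pdevwi(\piht)}{\piopt_i - \pihit}$, and averaging Step~1 over $\it \sim \w$ aligns each sampling weight $\wi$ with the per-user gap contribution, converting per-step progress into a decrement on $\regt$.

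\textbf{Step 3 (recursion and inversion).} Combining Steps~1--2 should produce a recursion of the schematic form $\expect[\regtpo] \leq (1 - a_t)\expect[\regt] + e_t$, with $a_t \approx 1/(t{+}1)$ after averaging $1/\cti$ over $\it$, and $e_t$ splitting into (i) a Lipschitz smoothness term summed harmonically, responsible for the $\boundA(\lipgwi+\boundigw)\ln(et)/t$ contribution, and (ii) a gradient-approximation term scaled by the FW step, responsible for the $\boundA\sqrt{\wi}(\boundigw+\boundegi)/\sqrt{t}$ contribution. The ansatz $\regt = O(\ln(et)/t + 1/\sqrt{t})$ then closes the induction and matches~\eqref{eq:generalregret}.

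\textbf{Main obstacle.} The delicate part is that the effective step size for user $i$ at time $t$ is the random $1/\cti$ rather than a deterministic $1/t$. The $\sqrt{\wi}$ factor in the bound strongly suggests controlling $\expect[1/\sqrt{\cti}] \lesssim 1/\sqrt{\wi\,t}$ (via concentration of $\cti$ around $\wi t$, or a Jensen / Cauchy-Schwarz argument), so that summing over users yields $\sum_i \wi\cdot(\boundigw+\boundegi)/\sqrt{\wi\,t} = \sum_i \sqrt{\wi}(\boundigw+\boundegi)/\sqrt{t}$. A secondary difficulty is keeping the $\ln(et)/t$ error (from the quadratic smoothness term) separated from the $1/\sqrt{t}$ error (from the gradient-approximation term) through the recursion, rather than collapsing them into a cruder $\ln(t)/\sqrt{t}$ rate; this requires careful bookkeeping of the two independent error sources.
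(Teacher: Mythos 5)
Your plan matches the paper's proof essentially step for step: the same block-coordinate descent inequality driven by the random stepsize $1/\cti$, the same swap from the approximate to the true gradient at a cost of $O(\boundA\norm{\g-\gh}_\infty)$, the same reduction of the regret to the Frank--Wolfe gap via concavity and per-block linear maximization, and the same recursion with deterministic rate $1/(t+1)$ plus binomial-concentration control of the stepsize mismatch $|\w_i(t+1)/(\cti+1)-1|$, which is precisely where the $\sqrt{\w_i}$ factor arises in the paper as well. The only detail to tighten is the smoothness remainder: to recover the stated constant proportional to $\lipgwi\boundA$ rather than $\lipgwi\boundA^2$, use $\max_{\a,\a'\in\arms}\norm{\a-\a'}_2^2\leq 2\boundA$, which follows from $0\leq \a_j\leq 1$.
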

Appendix A is devoted to the proof of this result. The main technical difficulty comes from the fact that we only update the parameters of the incoming user $\it$ with possibly non-uniform user activities, and we need a stochastic step size $1/\cti$ so that the iterates of the optimization algorithm match the resulting average exposures. Notice that the guarantee does not depend on the choice of $\pihz$ because it only affects the first gradient computed for the user. In practice we set $\pihz$ to the average exposure profile of a random top-$\K$ ranking.

Since we do not have access to the exact gradient because user activities are unknown, we use in practice the approximate gradient built using the empirical user activities:
\begin{align}\label{eq:which_approximate_gradient}
    \ghit = \pdevwhtmoi\big(\pihtmo\big) &&  \text{where~~} \wht=\frac{\ct}{t}
\end{align} 
with a fallback formula when $\whitmo=0$. In the next section, we discuss the computationally efficient implementation of this rule for the three objectives of Sec.~\ref{sec:fairness_objectives}, and we provide explicit bounds for $\boundegi$ of \eqref{eq:approximategradient} in each case.
}

{
\section{Applications of \ofr} \label{sec:algorithms}

\begin{table*}[t]
\begin{tabularx}{\textwidth}{*{3}{>{\centering\arraybackslash}X}}
\begin{algorithmic}
\State \textbf{Input at time $t$:} user index $i$
\State \textit{// Step (1)}
\State \textbf{Compute score for each item $j$:}
 \State 
 $\vphantom{\displaystyle\quahj\frac{\norm{\wei}_1}{\quahavg}\Big)}$
$\displaystyle\ghij = \psi'_{\alpha_1}(\uhi)\muij + \frac{\beta}{m}\psi'_{\alpha_2}(\vhj)$
\State \resizebox{0.65\linewidth}{!}{\color{white}$\displaystyle Z = \sqrt{\eta+\frac{1}{m} \sum_{i=1}^m \Big(\vhj-\frac{\quahj\norm{\wei}_1}{\quahavg}\Big)^2}$}
\State \textit{// Step (2)}
\State \textbf{Update} $\uhi$ and all $\vhj$ using \eqref{eq:welfupdate}
\State \textit{// Step (3)}
\State \textbf{Return} $\sigma = 
% \textrm{top-$\K$}
\topk(\ghi)$
\end{algorithmic}
\captionof{algorithm}{\ofr/two-sided fairness \eqref{eq:welfobj}.\label{alg:twosided}}
&
\begin{algorithmic}
\State \textbf{Input at time $t$:} user index $i$
\State \textit{// Step 1}
\State \textbf{Compute score for each item $j$:}
\State
$\displaystyle\ghij = \muij - \frac{\beta}{m\hat{Z}}\Big(\quahavg\vhj - \quahj\norm{\wei}_1\Big)$
% \State \textbf{Compute normalization}:
% \State $\displaystyle Z = \sqrt{\eta+\frac{1}{m} \sum_{i=1}^m \Big(\vhj-\quahj\frac{\norm{\wei}_1}{\quahavg}\Big)^2}$
\State where \resizebox{0.75\linewidth}{!}{$\displaystyle \hat{Z} = \sqrt{\eta+\frac{1}{m} \sum_{j=1}^m \Big(\quahavg\vhj-\quahj\norm{\wei}_1\Big)^2}$}
\State \textit{// Step (2)}
\State \textbf{Update} all $\vhj, \quahj, \quahavg$ using \eqref{eq:quaupdate}
\State \textit{// Step (3)}
\State \textbf{Return} $\sigma = 
% \textrm{top-$\K$}
\topk(\ghi)$
\end{algorithmic}
\captionof{algorithm}{\ofr/quality-weighted  \eqref{eq:quaobj}.\label{alg:quaexpo}}
&
\begin{algorithmic}
\State \textbf{Input at time $t$:} user index $i$
\State \textit{// Step 1 ($\groupofi$ is the group of user $i$)}
\State \textbf{Compute score for each item $j$:}
\State \resizebox{\linewidth}{!}{$\displaystyle\ghij = \muij - \frac{\beta}{m \hat{Z}_j}.\frac{t}{\cgi+1}\big(\vhgij - \vhavgj\big)$}
% \State \textbf{Compute normalizations}:
% \State $\displaystyle Z_j = \sqrt{\eta + \sum_{\group\in\groups} \Big(\vhgj-\vhavgg\Big)^2}$
\State where  \resizebox{0.8\linewidth}{!}{$\displaystyle \hat{Z}_j = \sqrt{\eta + \sum_{\group\in\groups} \Big(\vhgj-\vhavgj\Big)^2}$}
\State \textit{// Step (2)}
\State \textbf{Update}  $\forall j, \vhgij$ using \eqref{eq:balancedupdate}
\State \textit{// Step (3)}
\State \textbf{Return} $\sigma = 
% \textrm{top-$\K$}
\topk(\ghi)$
\end{algorithmic}
\captionof{algorithm}{\hspace{-0.5mm}\ofr/balanced exposure \eqref{eq:balancedobj}.\label{alg:balancedexpo}}
\end{tabularx}
\end{table*}

Practical implementations of \ofr do not rely on naive computations of $\ghit = \pdevwhtmoi\big(\pihtmo\big)$, because they would require explicitly keeping track of $\piht$. $\piht$ is a matrix of size $n\times m$, which is impossible to store explicitly in large-scale applications.\footnote{$n\times m$ is also the size of the matrix of (user, item) values $\mu$, which in practice is not stored explicitly. Rather, the values $\muij$ are computed on-the-fly (possibly using caching for often-accessed values) and the storage uses compact representations, such as latent factor models \citep{koren2015advances} or neural networks \citep{he2017neural}.} Importantly, as we illustrate in this section, for the objectives of Sec.~\ref{sec:fairness_objectives} it is unnecessary to maintain explicit representations of $\piht$ because the gradients depend on $\piht$ only through utilities or exposures, for which we can maintain online estimates.

\subsection{Practical implementations}
The implementation of \ofr for the three fairness objectives \eqref{eq:welfobj}, \eqref{eq:quaobj} and \eqref{eq:balancedobj} are described in Alg.~\ref{alg:twosided}, \ref{alg:quaexpo} and \ref{alg:balancedexpo} respectively, where we dropped the superscripts ${}^{(t-1)}$ and ${}^{(t)}$ for better readability. 
At every round $t$, there are three steps:
\begin{enumerate}
    \item compute approximate gradients based on online estimates of user values and exposures,
    \item update the relevant online estimates of user utility and item exposures,
    \item perform a top-$\K$ sort of the scores computed in step (1) to obtain $\sigmat$.
\end{enumerate}

We omit the details of the calculation of $\ghij$ in Alg.~\ref{alg:twosided}, \ref{alg:quaexpo} and \ref{alg:balancedexpo}, which are obtained by differentiation of $\objwh$ using \eqref{eq:which_approximate_gradient}.\footnote{In Alg.~\ref{alg:balancedexpo}, we use a factor $\frac{t}{\cgi+1}$ while the direct calculation would give a factor $\frac{t}{\cgi}$. The formula we use more gracefully deals with the case $\cgi=0$ and enjoys similar bounds when $t$ is large.}

\paragraph{Two-sided Fairness} For two-sided fairness \eqref{eq:welfobj}, we have:
\begin{align}\pdevwij\big(\piht\big) = \psi'_{\alpha_1}\big(\ui(\piht\big)\mui + \frac{\beta}{m}\psi'_{\alpha_2}\big(\vj(\piht)\big).
\end{align}
Let:
\begin{equation}
\begin{aligned}%\label{eq:welfupdate}
    \forall i\in\intint{n}, \uhit &= \ui\big(\piht\big) = 
    % \dotp{\mui}{\piht} =  
    \frac{1}{\cti}\sum_{\tp \le t} \indic{\itp=i} \dotp{\mui}{\atp},
    \\
    \forall j\in\intint{m},  \vhjt &= \sum_{i=1}^n \whit \pihijt = \frac{1}{t}\sum_{\tp=1}^t \atpj.
\end{aligned}
\end{equation}
(Recall $\at=\A(\sigmat)$.) This gives the formula computed in Alg.~\ref{alg:twosided} for $\ghit = \pdevwhtmoi\big(\pihtmo\big)$.

%Alg.~\ref{alg:twosided} for two-sided fairness is then obtained. 
For the online updates of $\uh$ and $\vh$, we use as initial value $\uhiz$ the utility of the random ranking
% \footnote{The formula is $\forall i, \uhiz=\dotp{\mui}{\frac{\norm{\wei}_1}{m}}$} 
and $\vhjz=0$. Since $\uhit$ only changes for $\it=i$, they are given by:
\begin{equation}
\begin{aligned}\label{eq:welfupdate}
   \forall i, \uhiz&=\dotp{\mui}{\frac{\norm{\wei}_1}{m}} \\
   \uhitt  &=  \uhittmo +\frac{1}{t}\Big(\dotp{\mui}{\at} - \uhittmo \Big) \\
   \vht &=  \vhtmo+\frac{1}{t}\Big(\at - \vhtmo\Big).
\end{aligned}
\end{equation}
% \begin{align}
%   \forall i, \uhiz=\dotp{\mui}{\frac{\norm{\wei}_1}{m}} && \uhitt  =  \frac{t-1}{t}\uhittmo +\frac{1}{t}\dotp{\mui}{\at} && \vht =  \frac{t-1}{t}\vhtmo+\frac{1}{t}\at
% \end{align}

\paragraph{Quality-weighted exposure} Similarly, for quality-weighted exposure \eqref{eq:quaobj}, approximate gradients $\pdevwhti(\piht)$ use online estimates of exposures $\vht$ as in \eqref{eq:welfupdate}, as well as online estimates of the qualities using $\forall j, \quahjz = 0$:
\begin{equation}
\begin{aligned}\label{eq:quaupdate}
    \quaht &= \frac{1}{t} \sum_{\tp=1}^t \muit = \quahtmo +  \frac{1}{t}\Big(\muit - \quahtmo \Big), \\
    \quahavgt &= \frac{1}{m} \sum_{j=1}^m \quahjt.
\end{aligned}
\end{equation}

\paragraph{Balanced exposure} Balanced exposure to user groups \eqref{eq:balancedobj} works similarly, except that we need to keep track of user counts within each group, which we denote by $\ctg$, as well as exposures within each group:
\begin{equation}
\begin{aligned}\label{eq:balancedupdate}
\forall j\in\intint{m},\ctg &= \sum_{i\in\group} \cti,\\ 
\vhgjt &= \frac{1}{\ctg}\sum_{\tp=1}^t \indic{\itp\in\group} \atpj, \\ \vhavgjt &= \frac{1}{\card{\groups}} \sum_{\group\in\groups} \vhgjt.
\end{aligned}
\end{equation}
We use $\vhgjt = 0$ if $\ctg=0$ since the item has not been exposed to a group we never saw. 
As for $\vh$ and $\quah$, these counts are updated online in $O(m)$ operations because they only change for the group of user $\it$.

The guarantees we obtain for these algorithms are the following. The proof is given in App.~B.
\begin{proposition}\label{prop:approx_gradients} The approximate gradients of Alg.~\ref{alg:twosided}, \ref{alg:quaexpo} and \ref{alg:balancedexpo} satisfy:
% \begin{itemize}
%     \item[Alg.~\ref{alg:twosided}:] $\displaystyle \norm{\pdevwhti(\piht) - \pdevwi\big(\piht\big)}_\infty \leq  \frac{\beta \norm{\psi''_{\alpha_2}}_{\infty}}{m}\sqrt{\frac{n}{t}}$,
%     \item[Alg.~\ref{alg:quaexpo}] $\displaystyle \norm{\pdevwhti(\piht) - \pdevwi\big(\piht\big)}_\infty \leq \frac{\beta\big(2+\norm{\wei}_1\big)^2}{m\min(\eta,\sqrt{\eta})}\sqrt{\frac{n}{t}}$,
%     \item[Alg.~\ref{alg:balancedexpo}] $\displaystyle \norm{\pdevwhti(\piht) - \pdevwi\big(\piht\big)}_\infty \leq \frac{\beta}{m\wsumgofi\min(\eta,\sqrt{\eta})}\bigg(
%     \frac{1}{\wsumgofi(t+1)}
%     +
%     8\sum_{\group\in\groups}\sqrt{\frac{\card{\group}}{\wsumg t}}\bigg)$.
% \end{itemize}
\begin{itemize}[leftmargin=1.5cm]
    \item[Alg.~\ref{alg:twosided}:] $\displaystyle \expect\Big[
        \norm[\Big]{\ghit - \pdevwi\big(\pihtmo\big)}_\infty
    \Big] \leq  \frac{\beta \norm{\psi''_{\alpha_2}}_{\infty}}{m}\sqrt{\frac{n}{t-1}}$,
    \item[Alg.~\ref{alg:quaexpo}:] $\displaystyle \expect\Big[
        \norm[\Big]{\ghit - \pdevwi\big(\pihtmo\big)}_\infty
    \Big] \leq \frac{\beta\big(2+\norm{\wei}_1\big)^2}{m\min(\eta,\sqrt{\eta})}\sqrt{\frac{n}{t-1}}$,
     \item[Alg.~\ref{alg:balancedexpo}:] $\displaystyle \expect\Big[
        \norm[\Big]{\ghit - \pdevwi\big(\pihtmo\big)}_\infty
    \Big] \leq \frac{\beta\Big(
    \frac{1}{\wsumgofi t}
    +
    8\sum_{\group\in\groups}\sqrt{\frac{\card{\group}}{\wsumg (t-1)}}\Big)}{m\wsumgofi\min(\eta,\sqrt{\eta})}$.
\end{itemize}
\end{proposition}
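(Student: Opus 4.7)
The starting observation is that the running estimates $\uhit$, $\vhjt$, $\quahjt$, $\vhgjt$ can each be rewritten as convex combinations weighted by the \emph{empirical} user activities $\wht=\ct/t$ (or the corresponding empirical group weights): for instance, $\vhjt=\sum_i\whit\pihijt$, $\quahjt=\sum_i\whit\muij$, $\uhit=\dotp{\mui}{\pihit}$ (which has no $\w$-dependence), and $\vhgjt=\sum_{i\in\group}(\cti/\ctg)\pihijt$. Consequently, $\ghit=\pdevwhtmoi(\pihtmo)$ exactly for Algs.~\ref{alg:twosided} and~\ref{alg:quaexpo}, while for Alg.~\ref{alg:balancedexpo} the score differs from $\pdevwhtmoi(\pihtmo)$ only by the purely deterministic substitution of $t/\cgi=1/\hat w^{(t-1)}_{\groupofi}$ by $t/(\cgi+1)$, which robustifies the formula when $\cgi=0$. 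The proposition thus reduces to (i) a deterministic bound of the form $\norm{\ghit-\pdevwi(\pihtmo)}_\infty\le L\norm{\whtmo-\w}_1$ for some problem-dependent constant $L$, combined with (ii) a concentration bound on $\whtmo-\w$. For (ii), $\cti\sim\mathrm{Binomial}(t,\wi)$ and Jensen's inequality give $\expect|\whit-\wi|\le\sqrt{\wi(1-\wi)/t}\le\sqrt{\wi/t}$, from which Cauchy--Schwarz yields $\expect\norm{\wht-\w}_1\le\sqrt{n/t}$; restricting the same argument to a group produces $\expect\sum_{i\in\group}|\whit-\wi|\le\sqrt{\card{\group}\wsumg/t}$.

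For Algorithm~\ref{alg:twosided}, the gradient depends on $\w$ only through $\vj(\pi)=\sum_i\wi\pi_{ij}$, which is $1$-Lipschitz in $\w$ in the $\ell_1$-norm since $\pi_{ij}\in[0,1]$. Combined with the $\norm{\psi''_{\alpha_2}}_\infty$-Lipschitzness of $\psi'_{\alpha_2}$, one immediately obtains the pointwise bound $\norm{\ghit-\pdevwi(\pihtmo)}_\infty\le(\beta\norm{\psi''_{\alpha_2}}_\infty/m)\norm{\whtmo-\w}_1$, and the stated estimate follows by taking expectation and applying (ii).

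For Algorithm~\ref{alg:quaexpo}, the gradient is a rational function of $\w$: $\pdevwij(\pi)=\muij-(\beta/m)\,x_j(\w)/Z(\w)$, with numerator $x_j(\w)=\quaavg(\w)\vj(\pi,\w)-\quaj(\w)\norm{\wei}_1$ and denominator $Z(\w)=\sqrt{\eta+(1/m)\sum_j x_j(\w)^2}\ge\sqrt{\eta}$. Using the algebraic identity $A/C-B/D=(A-B)/C+B(D-C)/(CD)$, the error reduces to bounding (a) $|x_j(\wh)-x_j(\w)|$, for which the bilinearity of $\quaavg\vj$ in $\w$ together with the a priori estimates $|\quaj|,|\vj|,|\quaavg|\le 1$ give $(2+\norm{\wei}_1)\norm{\wh-\w}_1$ along with the uniform bound $|x_j|\le 1+\norm{\wei}_1$; and (b) $|Z(\wh)-Z(\w)|$, obtained from $|\sqrt{a}-\sqrt{b}|\le|a-b|/(2\sqrt{\eta})$ applied to $Z^2$ and the preceding bound on $|x_j(\wh)-x_j(\w)|$. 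The two resulting contributions scale as $1/\sqrt{\eta}$ and $1/(\eta\sqrt{\eta})$ respectively; both are upper bounded by $1/\min(\eta,\sqrt{\eta})$ times a polynomial in $2+\norm{\wei}_1$, and substituting the $\ell_1$ concentration on $\whtmo-\w$ yields the stated bound.

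For Algorithm~\ref{alg:balancedexpo}, differentiating the penalty and using $\sum_\group(\vgj-\vavgj)=0$ to cancel the cross terms gives $\pdevwij(\pi)=\muij-\beta(\vgij-\vavgj)/(mZ_j\wsumgofi)$, whereas the algorithm uses $\beta t/(m\hat Z_j(\cgi+1))\cdot(\vhgij-\vhavgj)$. Inserting the pivot ``gradient with $\w$ replaced by $\whtmo$'' decomposes the error into a purely deterministic offset $|t/(\cgi+1)-t/\cgi|=t/(\cgi(\cgi+1))$ and a stochastic remainder; combined with the convention $\vhgij-\vhavgj=0$ when $\cgi=0$, the offset contributes (after the division by $\wsumgofi$ from the gradient formula) exactly the first term $1/(\wsumgofi t)$ inside the parenthesis of the bound. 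The stochastic remainder mirrors the analysis of Alg.~\ref{alg:quaexpo}, with the extra twist that $\vgj(\pi,\w)=\sum_{i\in\group}\wi\pi_{ij}/\wsumg$ is a ratio in $\w$ which pulls out a factor $1/\wsumg$ per group; combined with the per-group concentration from (ii), this yields the sum $\sum_\group\sqrt{\card{\group}/(\wsumg(t-1))}$. The main obstacle throughout is bookkeeping: carefully tracking the $1/\wsumg$ and $1/\wsumgofi$ factors through the derivative of the penalty and through the $A/C-B/D$ decomposition while reaching the precise constants in the stated bound.
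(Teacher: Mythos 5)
Your overall strategy --- reduce each bound to a deterministic Lipschitz estimate in $\norm{\whtmo-\w}_1$ and then apply an $\ell_1$ concentration bound on the empirical activities --- is exactly the paper's strategy, and your treatment of Alg.~\ref{alg:twosided} and Alg.~\ref{alg:quaexpo} (exact cancellation of the utility term, the $A/C-B/D$ decomposition with $h(x)\ge\sqrt{\eta}$ and $h(x')\ge|x'_j|$, the uniform bound $|x_j|\le 1+\norm{\wei}_1$) matches the paper's Appendix~B essentially line by line. Your elementary Jensen-plus-Cauchy--Schwarz derivation of $\expect\norm{\wht-\w}_1\le\sqrt{n/t}$ is a legitimate substitute for the paper's citation of a minimax multinomial-estimation result giving $\sqrt{(n-1)/t}$; both suffice for the stated constants.

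The genuine gap is in the balanced-exposure case. The quantity you must control is the deviation of the \emph{group-normalized} empirical weights, $\sum_{i\in\group}\big|\cti/\ctg-\wi/\wsumg\big|$, since $\vhgjt=\sum_{i\in\group}(\cti/\ctg)\pihijt$ while $\vgj(\piht)=\sum_{i\in\group}(\wi/\wsumg)\pihijt$. Your stated per-group concentration, $\expect\sum_{i\in\group}|\whit-\wi|\le\sqrt{\card{\group}\wsumg/t}$, concerns the unnormalized deviation and does not directly yield the normalized one: writing $\cti/\ctg=(\cti/t)/(\ctg/t)$ introduces the random denominator $\ctg/t$, which fluctuates around $\wsumg$ and can be zero, so you cannot simply ``pull out a factor $1/\wsumg$ per group'' as you claim. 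The paper resolves this with a dedicated lemma: condition on the sequence of sampled groups (given which, the users sampled within a group are i.i.d.), apply the $\ell_1$ bound with $\ctg$ in place of $t$, then integrate out $\ctg$ via $\expect[1/(X+1)]\le 1/(p(t+1))$ for $X\sim{\rm Bin}(p,t)$ and Jensen's inequality; this is precisely where the $1/\sqrt{\wsumg}$ (rather than $\sqrt{\wsumg}$) dependence in $\sqrt{\card{\group}/(\wsumg t)}$ comes from. A related issue affects your deterministic-offset term: the pivot $t/\cgi$ is undefined when $\cgi=0$, and identifying the expectation of $t/(\cgi(\cgi+1))$ with $1/(\wsumgofi t)$ again requires the binomial identity rather than a pointwise computation. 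The paper instead compares $t/(\cgi+1)$ directly to $1/\wsumgofi$ by the argument of its Lemma~\ref{lem:Gt}, which produces an additional $O(1/\sqrt{\wsumgofi t})$ contribution that is then absorbed into the $8\sqrt{\card{\group}}$ constant. Without these two steps your sketch does not reach the stated bound for Alg.~\ref{alg:balancedexpo}.
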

Overall, they all decrease in $O(\frac{1}{\sqrt{t}})$ as desired to apply our convergence result Th.~\ref{thm:boundgeneral}. More interestingly, the bounds do not depend on $\w$, which means that the objectives are well-behaved even when some users have low probabilities. The balanced exposure criterion does depend on $\frac{1}{\wsumg}$, which means that the bound becomes arbitrarily bad when some groups have small cumulative activity. This is natural, since achieving balanced exposure across groups dynamically is necessarily a difficult task if one group is only very rarely observed.

% Let $\overline{w}_{\min}$ be the minimum total weight of a group, that is
% \begin{equation}
% \overline{w}_{\min} := \min_{s \in \mathcal S}\wsumg.
% \label{eq:wmin}
% \end{equation}

Putting together Thm.~\ref{thm:boundgeneral} and Prop.~\ref{prop:approx_gradients}, we obtain regret bounds of order $1/\sqrt{t}$:
\begin{corollary}%[Convergence rate of proposed algorithms]
\label{cor:boundalgos}
Ignoring constants and assuming $\eta\leq 1$, the regrets $R(t)$ of Alg.~\ref{alg:twosided}, ~\ref{alg:quaexpo} and \ref{alg:balancedexpo} are bounded as in Table~\ref{tab:rates}.
\end{corollary}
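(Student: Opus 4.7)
The plan is to combine Theorem~\ref{thm:boundgeneral} with Proposition~\ref{prop:approx_gradients} in an essentially mechanical way. Proposition~\ref{prop:approx_gradients} already supplies the constants $\boundegi$ that control $\expect[\norm{\ghit - \pdevwi(\pihtmo)}_\infty] \leq \boundegi/\sqrt{t}$ (up to a harmless $\sqrt{t/(t-1)} \leq \sqrt{2}$ factor for $t\geq 2$, since the stated bounds are in $1/\sqrt{t-1}$). What remains is therefore to verify the regularity assumptions of Theorem~\ref{thm:boundgeneral} by deriving, for each of the three objectives, explicit bounds $\boundigw$ on $\norm{\pdevwi}_\infty$ and Lipschitz constants $\lipgwi$ of $\pdevwi(\pi)$ with respect to $\pii$, and then to substitute everything into \eqref{eq:generalregret}.

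For two-sided fairness \eqref{eq:welfobj}, a direct computation gives $\pdevwij(\pi) = \psi'_{\alpha_1}(\ui(\pi))\muij + \frac{\beta}{m\wi}\psi'_{\alpha_2}(\vj(\pi))$. Since $\ui(\pi), \vj(\pi) \in [0, \norm{\wei}_1]$ and $\muij\in[0,1]$, the bounds $\boundigw$ follow from $\norm{\psi'_{\alpha_1}}_\infty, \norm{\psi'_{\alpha_2}}_\infty$ on that interval, together with a $\beta/(m\wi)$ factor. Differentiating once more with respect to $\pi_{ij'}$, using $\partial \ui/\partial \pi_{ij'} = \mu_{ij'}$ and $\partial \vj/\partial \pi_{ij'} = \wi\indic{j'=j}$, the Lipschitz constant $\lipgwi$ is read off in terms of $\norm{\psi''_{\alpha_1}}_\infty, \norm{\psi''_{\alpha_2}}_\infty$, and again $\beta/(m\wi)$. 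For quality-weighted exposure \eqref{eq:quaobj} and balanced exposure \eqref{eq:balancedobj}, the analogous calculation is longer but follows the same template: the $1/\sqrt{\eta + \cdot}$ factor is bounded by $1/\sqrt{\eta}$, so $\boundigw$ scales as $\beta/(m\sqrt{\eta}\,\wi)$ for \eqref{eq:quaobj} and $\beta/(m\sqrt{\eta}\,\wi\wsumgofi)$ for \eqref{eq:balancedobj}; differentiating a second time through $1/\sqrt{\eta+\cdot}$ produces one extra $1/\eta$ factor, yielding $\lipgwi$ proportional to $\beta/(m\eta\,\wi)$ and $\beta/(m\eta\,\wi\wsumgofi^2)$ respectively.

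To finish, plug these $\boundigw$, $\lipgwi$ and the $\boundegi$ of Proposition~\ref{prop:approx_gradients} into \eqref{eq:generalregret}, with $\boundA=\norm{\wei}_1$. The assumption $\eta\leq 1$ collapses $\min(\eta,\sqrt{\eta})=\eta$ throughout. Summing over users, Cauchy--Schwarz gives $\sum_i \sqrt{\wi}\leq \sqrt{n}$, and the $\frac{1}{\wi}$ factors in $\boundigw$ and $\lipgwi$ for \eqref{eq:welfobj} and \eqref{eq:quaobj} cancel against $\sqrt{\wi}$ to produce $\sum_i 1/\sqrt{\wi}$, or can be further bounded using the assumptions on $\w$ (and for \eqref{eq:balancedobj} one gets an extra $1/\wsumg$ per group). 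Finally, $\ln(et)/t = o(1/\sqrt{t})$, so the first term of \eqref{eq:generalregret} is absorbed into the second up to a multiplicative constant, and the overall rate is $O(1/\sqrt{t})$ with the constants advertised in Table~\ref{tab:rates}.

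The main obstacle is the careful bookkeeping of $\lipgwi$ for \eqref{eq:quaobj} and especially \eqref{eq:balancedobj}: the Lipschitz constant must simultaneously absorb the chain-rule derivative of $1/\sqrt{\eta+\cdot}$, the $\wi/\wsumg$ factor from $\partial \vgj/\partial \piij$, and the cross-terms arising because perturbing $\piij$ moves both $\vhgij$ and, through the average, $\vhavgj$. Once these second-derivative computations are done cleanly, the rest is arithmetic.
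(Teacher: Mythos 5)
Your overall strategy is the paper's: combine Theorem~\ref{thm:boundgeneral} with Proposition~\ref{prop:approx_gradients}, note that the $\ln(et)/t$ term (and hence the $\lipgwi$'s) is dominated by the $1/\sqrt{t}$ term, take $\boundA=\norm{\wei}_1$, use $\sum_i\sqrt{\wi}\leq\sqrt{n}$, and collapse $\min(\eta,\sqrt{\eta})=\eta$. However, there is a genuine error in your computation of the normalized gradients that breaks the final bounds. You write $\pdevwij(\pi)=\psi'_{\alpha_1}(\ui(\pi))\muij+\frac{\beta}{m\wi}\psi'_{\alpha_2}(\vj(\pi))$, with a spurious $1/\wi$. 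Recall that $\vj(\pi)=\sum_{i}\wi\piij$, so $\partial\vj/\partial\piij=\wi$, and this $\wi$ cancels exactly against the $1/\wi$ normalization in the definition $\pdevwi=\frac{1}{\wi}\frac{\partial\objw}{\partial\pi_i}$; the correct normalized gradient is $\psi'_{\alpha_1}(\ui)\muij+\frac{\beta}{m}\psi'_{\alpha_2}(\vj)$, as in Alg.~\ref{alg:twosided}. The same cancellation occurs for quality-weighted exposure, and for balanced exposure $\partial\vgj/\partial\piij=\wi/\wsumgofi$ leaves $1/\wsumgofi$ (not $1/(\wi\wsumgofi)$) after normalization. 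The paper makes exactly this point in the closing remark of its proof: the user activity does not appear in $\boundigw$ because both the user objective and the item exposures weight $\pii$ by $\wi$.

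This is not a cosmetic issue. With your constants, the second term of \eqref{eq:generalregret} involves $\sum_i\sqrt{\wi}\,\boundigw\propto\sum_i 1/\sqrt{\wi}$, which is not bounded by $\sqrt{n}$ and blows up as some $\wi\to 0$ (the standing assumption is only $\wi>0$, with no lower bound). You would therefore not recover the $\w$-independent bounds of Table~\ref{tab:rates} for Alg.~\ref{alg:twosided} and \ref{alg:quaexpo}, nor the bound depending only on $\wming$ for Alg.~\ref{alg:balancedexpo}; indeed the paper explicitly emphasizes in Sec.~\ref{sec:algorithms} that the bounds remain well-behaved when some users have low probabilities. Once you redo the differentiation with the cancellation in place, obtaining $\boundigw\lesssim\eta^{\alpha_1-1}+\frac{\beta}{m}\eta^{\alpha_2-1}$, $1+\frac{\beta}{m\sqrt{\eta}}$ and $1+\frac{\beta}{m\wming\sqrt{\eta}}$ respectively (plus, for balanced exposure, a Cauchy--Schwarz step bounding $\sum_{\group\in\groups}\sqrt{\card{\group}/\wsumg}$ by $\sqrt{n\card{\groups}/\wming}$ in $\boundegi$), the rest of your argument goes through and matches the paper.
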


\begin{table}[h!]
  \begin{center}
       {\tabulinesep=1.2mm
    \begin{tabu}{|c|c|c|}
    \hline
    %   Problem &  
      Algorithm & Order of magnitude of $\regt$\\
      \hline
    %   Two-sided fairness & 
      Alg.~\ref{alg:twosided} & $\bigg(\sqrt{n}\|b\|_1 + \dfrac{n\|b\|_1\beta}{m}\bigg)\Big(\eta^{\alpha_1-1} + \eta^{\alpha_2-2}\Big)\sqrt{\dfrac{1}{t}}$\\
      \hline
    %   Quality-weighted & 
      Alg.~\ref{alg:quaexpo} & $\bigg(\sqrt{n}\|b\|_1 + \dfrac{n\|b\|_1^3\beta}{m}\bigg)\eta^{-1}\sqrt{\dfrac{1}{t}}$\\
      \hline
    %   Balanced exposure & 
      Alg.~\ref{alg:balancedexpo} 
      & $\bigg(\sqrt{n}\|b\|_1 + \dfrac{n\|b\|_1\beta}{m}\sqrt{\dfrac{|\mathcal S|}{\overline{w}_{\min}^3}}\bigg)\eta^{-1}\sqrt{\dfrac{1}{t}}$\\
      \hline
      \end{tabu}
      }
      \end{center}
      \caption{Upper bounds on regret $R(t)$, ignoring constants and assuming $\eta\leq 1$. In all cases, we have $R(t) = \mathcal O(1/\sqrt{t})$. For balanced exposure, the regret bound also depends on the minimum total weight of a group $\overline{w}_{\min}= \min_{s \in \mathcal S}\wsumg$. }
      \label{tab:rates}
\end{table}

The proof of Cor.~\ref{cor:boundalgos} is given in App.~C. Compared to a batch Frank-Wolfe algorithm for the same objectives, we obtain a convergence in $O(1/\sqrt{t})$ instead of $1/t$ \citep[Prop 4.]{do2021two}. Part of this difference is due to the variance in the gradients due to unknown user activities, but Th.~\ref{thm:boundgeneral} would be of order $1/\sqrt{t}$ even with true gradients (i.e., $\boundegi=0$). We do not believe our bound can be improved because our online setting we consider is only equivalent to a Frank-Wolfe algorithm if we consider a stochastic ``stepsize'' of $1/\cti$ for user $i$ at time step $t$ (which yields the average exposures $\piht$, our object of study), which introduces additional variance in the optimization. We leave the proof of lower bounds to future work.

\subsection{Computational complexity} 

To simplify the discussion on computational complexity, we assume the number of groups in balanced exposure is $O(1)$ (i.e., negligible compared to $n$ and $m$), which is the case in practice for groups such as gender or age. For each of the algorithms, the necessary computations involve $O(m)$ floating point operations to compute the scores of all items (step (1)), $O(m+\K\ln\K)$ (amortized) comparisons for the top-$\K$ sort (step (3)). The update of the online estimates (step (2)) requires $O(m)$ operations. 
% \footnote{
More involved implementations of this step require only $O(\K)$ operations by only updating the recommended items, but they require additional computations in step (1), which remains in $O(m)$.
% } 
In all cases, the computation cost is dominated by the top-$\K$ sort, which would likely required in practice even without consideration for fairness of exposure. Thus, \ofr provides a general approach to fairness of exposure in online ranking that does not involve significantly more computations than having no considerations for fairness of exposure at all, despite optimizing an objective function where the optimal ranking of each user depends on the rankings of all other users.

\paragraph{Memory requirements} For two-sided fairness (Alg.~\ref{alg:twosided}), we need $O(n+m)$ bytes for $\uh$ and $\vh$. For quality weighted exposure we need  $O(m)$ bytes to store $\vh$ and $\quah$, while we need $O(m\card{\groups})$ bytes for balanced exposure. In all cases, storage is of the order $O(n+m)$. Notice that in practice, it is likely that counters of item exposures and user utility are computed to monitor the performance of the system anyway. The additional storage of our algorithm is then negligible.
}
{
\section{Experiments}\label{sec:xp}

We provide in this section experiments on simulated ranking tasks following the protocol of \citet{do2021two}. Our experiments have two goals. First we study the convergence of \ofr to the desired trade-off values for the three objectives of Sec.~\ref{sec:fairness_objectives}, by comparing objective function values of \ofr and the batch Frank-Wolfe algorithm for fair ranking of \citep{do2021two} at comparable computational budgets. Second, we compare the dynamics of \ofr and \fairco \citep{morik2020controlling}, an online ranking algorithm designed to asymptotically achieve equal quality-weighted exposure for all items. We also provide a comparison between \ofr and \fairco on balanced exposure by proposing an ad-hoc extension to \fairco for that task. In the three next subsections, we first describe our experimental protocol (Subsection \ref{sec:xp:setup}). Then, we give qualitative results in terms of the trade-offs achieved by \ofr by varying the weight of the exposure objective (Subsection~\ref{sec:xp:quali}). We finally we dive into the comparison between \ofr and batch Frank-Wolfe (Subsection~\ref{sec:xp:results_convergence}) and between \ofr and \fairco (Subsection~\ref{sec:xp:results_fairco}).

\subsection{Experimental setup}\label{sec:xp:setup}

\paragraph{Data} We  use the Last.fm dataset of \citet{Celma:Springer2010}, which includes $360k$ users and $180k$ items (artists), from which we select the top $15k$ users and $15k$ items having the most interactions. We refer to this subset of the dataset as \lastfm. The (user, item) values are estimated using a standard matrix factorization for learning from positive feedback only \citet{hu2008collaborative}. Details of this training can be found in App.~D.1. Since we focus on ranking given the preferences rather than on the properties of the matrix factorization algorithm, we consider these preferences as our ground truth and given to the algorithm, following previous work\citep{patro2020fairrec,wu2021tfrom,chakraborty2019equality,do2021two}.  In App.~D.3, we present results on the MovieLens dataset \cite{harper2015movielens}. The results are qualitatively similar. Both datasets come with a ``gender'' table associated to user IDs. It is a ternary value 'male', 'female', 'other' (see \citep{Celma:Springer2010,harper2015movielens}  for details on the datasets). On \lastfm, the resulting dataset contains $~10k$/$~3.6k$/$1.4k$ users of category 'male'/'female'/'other' respectively.

\paragraph{Tasks} We study the three tasks described in Sec.~\ref{sec:fairness_objectives}: two-sided fairness, quality-weighted exposure and balanced exposure to user groups. Note that it is possible to study weighted combinations of these objective, since the combined objective would remain concave and smooth. We focus on the three canonical examples to keep the exposition simple. We use the gender category described above as user groups, and they are only used for the last objective. We study the behavior of the algorithm as we vary $\beta>0$, which controls the trade-off between user utility and item fairness. For two-sided fairness, we take $\alpha_1 = \alpha_2=0$ in \eqref{eq:welfobj}, which are recommended values in \citet{do2021two} to generate trade-offs between user and item fairness. In all cases, we use  $\eta=1$ in this section, and show the results for $\eta=0.01$ (a less smooth objective function) in App.~D.2. We assume that the true user activities are uniform (but the online algorithm does not know about these activities). We consider top-$\K$ rankings with $\K=40$. We set the exposure weights to $\wei_\k=\frac{1}{\log_2(1+\k)}$, which correspond to the well-known DCG measure, as in \citep{biega2018equity,patro2020fairrec,do2021two}. In the following, we use the term \textit{iteration} to refer to a time step, and \textit{epoch} to refer to $n$ timesteps (which correspond to the order of magnitude of time steps required to see every user). Notice that in the online setting, users are sampled with replacement at each iteration following our formal framework of Sec.~\ref{sec:framework}, so the online algorithms are not guaranteed to see every user at every epoch.

\paragraph{Comparisons} We compare to two previous works:
\begin{enumerate}[leftmargin=*]
    \item Batch Frank-Wolfe (\batchfw): The only tractable algorithm we know of for all these objectives is the \static algorithm of \citet{do2021two}. We compare offline vs online learning in terms of convergence to the objective value for a given computational budget. The algorithm of \citet{do2021two} is based on Frank-Wolfe as well, which we refer to as \batchfw and our approach (referred to as \ofr) is an online version of \batchfw. Thus the cost per user  per epoch (one top-$\K$ sort) are the same. We use this baseline to benchmark how fast \ofr convergence to the optimal value.
    \item \fairco \citep{morik2020controlling}: We use the approach from \citep{morik2020controlling} introduced for quality-weighted exposure in dynamic ranking. In our notation, dropping the time superscrits, given user $i$ at time step $t$, \fairco outputs
    \begin{equation}
    \begin{aligned}\label{eq:fairco}
        \text{(\fairco \citep{morik2020controlling})}&&
        \sigma &= \topk(\tilde{\mu}) \\
        &&\text{with~} \tilde{\mu}_j &= \muij + \beta (t-1) \max_{\jp}\Big(\frac{\vhjp}{\quahjp} - \frac{\vhj}{\quahj} \Big)
    \end{aligned}
    \end{equation}
    where $\beta$ trades-off the importance of the user values $\muij$ and the discrepancy between items in terms of quality-weighted exposure. Notice that the item realizing the maximum in \eqref{eq:fairco} is the same for all $j$, so the computational complexity of \fairco is similar to that of \ofr.
    
    A fundamental difference between \fairco is that the weight given to the fairness objective increases with $t$. The authors proved in the paper that the average $\frac{1}{m(m-1)} \sum_{j, \jp} \Big|\frac{\vhjp}{\quahjp} - \frac{\vhj}{\quahj} \Big|$  converges to $0$ at a rate $O(1/t)$. However, they do not discuss the convergence of the user utilities depending on $\beta$. Fundamentally, \fairco and \ofr address different problems, since \ofr aims for trade-offs where the relative weight of the user objective and the item objective is fixed from the start. Even though they should converge to different outcomes, we compare the intermediate dynamics at the early stage of optimization.

\end{enumerate}
In addition, we compare to an extension of \fairco to balanced exposure. Even though \fairco was not designed for balanced exposure, we propose to follow a similar recipe as \eqref{eq:fairco} as baseline for balanced exposure:
    \begin{align}\label{eq:fairco_balanced}
        \sigma = \topk(\tilde{\mu}) &&\text{with~} \tilde{\mu}_j = \muij + \beta (t-1) \max_{\group\in\groups}\Big(\vhgj - \vhgij \Big)
    \end{align}

All our experiments are repeated and averaged on three seeds for sampling the users at each step. The online algorithms are run for $5000$ epochs, and the batch algorithms for $50,000$ epochs.

% \begin{remark}[\dynamic vs \static ranking] Since \batchfw is applicable to all our objectives, we could think of a baseline for dynamic ranking as:
% \begin{enumerate}
%     \item Optimize the \static objective offline with \batchfw to get optimal $\pih$ using $T$ batch Frank-Wolfe epochs,
%     \item at each time step $t$, sample a ranking from $\pihi$, using Birkhoff-von Neumann decomposition \citep[see e.g.,]{singh2018fairness}.
% \end{enumerate}
% Unfortunately, this procedure requires storing $\pih$, which costs either $O(n\times m\times \K)$ or $O(n\times T\times \K)$ bytes \citep{do2021two}. This is prohibitive when we need $T$ of the order of several hundreds or thousands (as is the case in our experiments. Moreover, applying this baseline requires the knowledge of user activities. In contrast, the storage requirement of \ofr is $O(n+m)$, and it does not require the knowledge of the user activities beforehand.
% \end{remark}

\subsection{Qualitative results: effect of varying $\protect\beta$}\label{sec:xp:quali}

\begin{figure*}[t]
    \centering
    \includegraphics[width=0.85\linewidth]{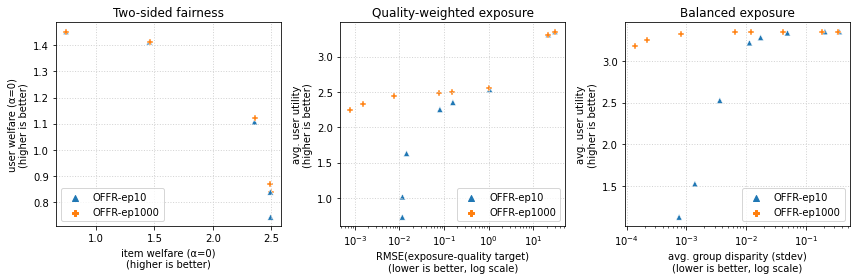}
    \caption{Trade-offs between user objective ($y$-axis) and item fairness ($x$-axis), at the beginning of the online process ($10$ epochs) and closer to the end ($1000$ epochs). For two-sided fairness, both user and item objectives should be maximized, while for quality-weighted balanced exposure the item objectives ($x$-axis) should be minimized. As expected, varying the weight of the item objective $\beta$ leads to different trade-offs between user utility and item exposure. Comparing epochs $10$ and $1000$ on quality-weighted and balanced exposure, we observe that with large $\beta$, \ofr tends to prioritize the item objective and has low user utility at the beginning of training.}
    \label{fig:tradeoffs}
\end{figure*}

We first present qualitative results regarding the trade-offs that are obtained by varying the weight of the fairness penalty $\beta$ from 0.001 to 100 by powers of $10$, for all three tasks in Fig.~\ref{fig:tradeoffs}. The $y$-axis is the user objective for two-sided fairness and the average user utility for quality-weighted and balanced exposure. The $x$-axis is the item objective (higher is better) for two-sided fairness, and the item penalty term with $\eta=0$ of \eqref{eq:quaobj} and \eqref{eq:balancedobj} for quality-weighted and balanced exposure respectively. 

At a high level, we observe as expected a Pareto front spanning a large range of (user, item) objective values. We also observe on all three tasks but specifically quality-weighted and balanced exposure that at the beginning of training (epoch $10$), the item objective values are close to the final values, but the user objective values increase a lot on the course of training. We will get back to this observation in our comparison with \fairco. As anecdotal remarks, we first observe that on two-sided ranking, convergence is very fast and the trade-offs obtained at epoch $10$ and $1000$ are relatively close. Second, we observe that for balanced exposure on this dataset, it is possible to achieve near perfect fairness (item objective $\leq 10^{-3}$) at very little cost of user utility at the end of training.

\subsection{Online convergence}\label{sec:xp:results_convergence}

To compare the convergence of \ofr compared to \batchfw, 
Fig.~\ref{fig:convergence} plots the regret, i.e., the difference between the maximum value and the obtained objective value on the course of the optimization for both algorithms,\footnote{The maximum value for the regret is taken as the maximum between the result of \batchfw after $50$k epochs and \ofr after $5$k epochs. For both \ofr and \batchfw, we compute the ideal objective function, i.e., knowing the user activities $\w$. Notice that \ofr does not know $\w$ but \batchfw does.} in log-scale as a function of the number of epochs for the three tasks for $\beta\in\{0.01, 1.0\}$. We first observe that convergence is slower for larger values of $\beta$, which is coherent with the theoretical analysis. We also observe that for the first $1000$ epochs (recall that an epoch has the same compatational cost for both algorithms), \ofr  fares better than the batch algorithm. Looking at more epochs or different values of $\eta$ (shown in Fig.~4 and 6 in App.~D.2), we observe that \batchfw eventually catches up. This is coherent with the theoretical analysis, as the \batchfw converges in $1/t$ \citep{do2021two}, but \ofr in $O(1/\sqrt{t})$. In accordance with the well-known performance of stochastic gradient descent in machine learning \citep{bottou2007tradeoffs}, the online algorithm seems to perform much better at the beginning, which suggests that it is practical to run the algorithm online.

\begin{figure*}[t]
    \centering
    \includegraphics[width=0.85\linewidth]{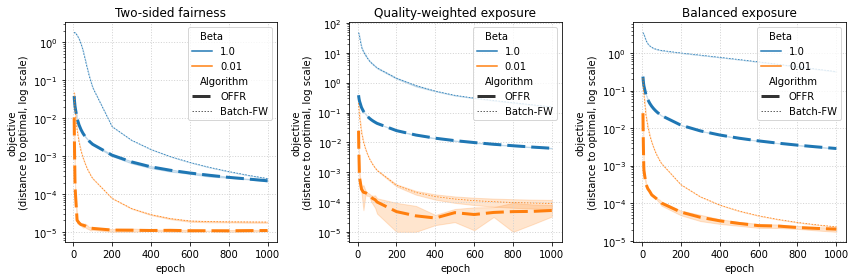}
    \caption{Convergence speed of \ofr compared to \batchfw on the three fairness objectives, for $\beta \in\{0.01, 1\}$ and $\eta=1$. The $y$-axis is the regret in log-scale. We observe that \ofr is faster than \batchfw at the beginning, especially for large $\beta$.}
    \label{fig:convergence}
\end{figure*}

\subsection{Comparison to \fairco}\label{sec:xp:results_fairco}

We give in Fig.~\ref{fig:fairco} illustrations of the dynamics of \ofr compared to \fairco \citep{morik2020controlling} described in \eqref{eq:fairco} and \eqref{eq:fairco_balanced}. Since \fairco aims at driving a disparity to $0$, it cannot be applied to two-sided fairness, so we focus on quality-weighted and balanced exposure. Contrarily to the previous section, we cannot compare objective functions at convergence or convergence rates because \fairco does not optimize an objective function. Our plots show the item objective ($x$-axis, lower is better, log-scale) and the average user utility on the $y$-axis. Then, for \ofr and \fairco for two values of $\beta$, we show the (item objective, user utility) values obtained on the course of the algorithm. For \fairco, we chose $\beta=0.001$ which gives overall the highest user utility values we observed, and $\beta=1$, a representative large value. For \ofr we chose two different values of $\beta$ that achieve different trade-offs. This choice has no impact on the discussion. The left plots show vanilla \ofr, the right plots show \ofr with a ``pacing'' heuristic described later.

\begin{figure*}[t]
    \centering
    \includegraphics[width=0.48\linewidth]{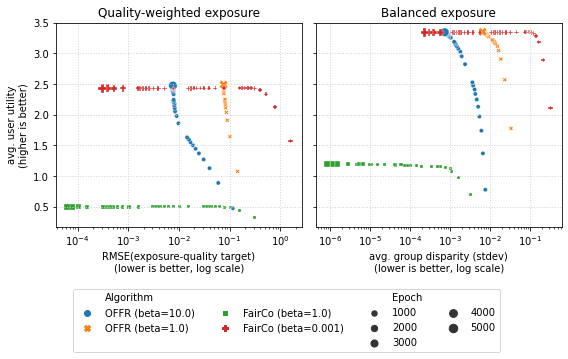}
    \hfill
    \includegraphics[width=0.48\linewidth]{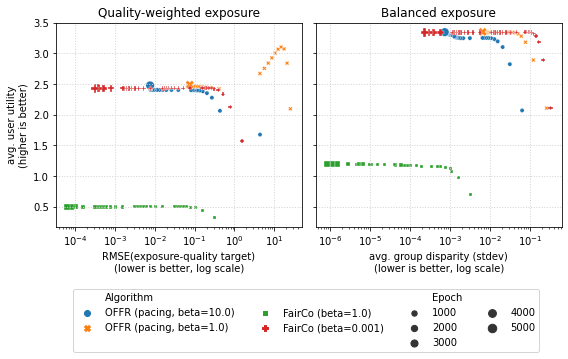}
    \caption{Convergence of \ofr compared to the dynamics of \fairco. Each point is the average user utility ($y$-axis) vs item objective ($x$-axis, log-scale) for an algorithm and value of $\beta$ (color/marker), at a given epoch (the size of the markers increase with the epoch number). The trajectory describes the online dynamics of each algorithm in terms of the trade-offs they achieve. (left) \ofr converges to the trade-off dictated by its value of $\beta$ while keeping its item objective near the target value from the beginning, increasing the user utility with time. (right) the pacing heuristic added to \ofr provides a way to approach the final trade-off while keeping user utility high during the entire course of optimization.}
    \label{fig:fairco}
\end{figure*}

\paragraph{Convergence properties} Looking at the left plot, we see \ofr converging to its trade-off dictated by the value of $\beta$. On the other hand, \fairco does not converge. As expected, as time goes by, \fairco reduces the item objective to low values. Interestingly though, it seems that the average user utility \textit{seems to} converge for \fairco to a value that depends on $\beta$. It is likely an effect of the experimental setup: with $\beta=0.001$, \fairco is far from the regime where it achieves low values of the item objective within our $5000$ epochs (as seen by the discrepancy in item objective between $\beta=1$ and $\beta=0.001$). Overall, since \fairco does not have a clear objective function nor theoretical guarantees regarding the user utility, \fairco does not allow to choose the trade-off between user and item objectives that is desired. On the bright side, \fairco does happen to reduce the item objective to very low values for $\beta=1$ as the number of iteration decreases. 

\paragraph{Trade-offs} Interestingly, \fairco and \ofr have different dynamics. The plots show that on the course of the iterations, \ofr rapidly reaches its item objective, but takes time to reach its user objective (as seen by the ``vertical'' pattern of \ofr in the left plot of Fig.~\ref{fig:fairco}, which means that the item objective does not change a lot). In contrast, \fairco for small $\beta$ starts from high user utility and decreases the item objective from there. Evidently, \ofr and \fairco strike different trade-offs, and neither of them is universally better: it depends on whether we prioritize user utility or item fairness at the early stages of the algorithm. Nonetheless, to emulate \fairco's trade-offs, we propose a ``pacing'' heuristic which uses a time-dependent $\beta$ in our objective, using $\beta_t = \min(\beta, \gamma.\frac{t}{n})$ where $\gamma>0$ is the pacing factor. the right plots of Fig.~\ref{fig:fairco} show the results with $\gamma=0.01$. We observe now a more ``horizontal'' pattern in the dynamics of \ofr, similarly to \fairco, meaning that \ofr sucessfully pioitizes user utility over item fairness in the early stages of the algorithm. Whether or not such a pacing should be used depends on the requirements of the application.

}
{
\section{Related Work}\label{sec:relatedwork}

The question of the social impact of recommender systems started with independent audits of bias against groups defined by sensitive attributes 
\citep{sweeney2013discrimination,kay2015unequal,hannak2014measuring,mehrotra2017auditing,lambrecht2019algorithmic}.
Algorithms for fairness of exposure have been studied since then
\citep{celis2017ranking,burke2017multisided,biega2018equity,singh2018fairness,morik2020controlling,zehlike2020reducing,do2021two}. The goal is often to prevent winner-take-all effects or popularity bias \citep{singh2018fairness,abdollahpouri2019unfairness} or promote smaller producers to incentivize production \citep{liu2019personalized,mehrotra2018towards,mladenov2020optimizing}.

The question of online ranking is often studied in conjunction with learning to rank, i.e., learning the (user, item) values $\mu$. The previous work by \citet{morik2020controlling}, which we compare to in the experiments (the \fairco baseline), had this dimension, which we do not. On the other hand, as we discussed, their algorithm has limited scope because it only aims at asymptotically removing any disparity. The algorithm cannot be used on other forms of loss function such as two-sided fairness, and cannot be used to converge to intermediate trade-offs. Their theoretical guarantee is also relatively weak, since they only prove that the exposure objective converges to $0$, without any guarantee on the user utility. In contrast, we show that the regret of our algorithm converges to 0 for a wide range of objectives.

\citet{yang2021maximizing} also proposes an online algorithm combining learning to rank and fairness of exposure, but they compare the exposure of groups of items within single rankings, as opposed to considering exposure of items across users. Their fairness criterion does not involve the challenge we address, since the optimal rankings in their case can still be computed individually for every user.
%
% Fairness of exposure in dynamic ranking is studied in  \citep{morik2020controlling,yang2021maximizing}, with criteria similar to the quality-weighted exposure penalty. The goal of \citet{morik2020controlling} is to drive disparities in quality-weighted exposure to zero, rather than maximizing a user utility-item unfairness trade-off. These works address the problem of dynamic learning-to-rank, meaning that the position bias and the relevance of items to users must be estimated from user feedback. However, none of these works consider the fact that users have different activity levels, which must be estimated to properly measure the considered fairness objectives. This is the key challenge we address in this paper. %In the same setting, \cite{yang2021maximizing} consider a similar criterion, but focusing on the top-$k$ positions only. 
%

Recently, fairness of exposure has been studied in the bandit setting  \citep{jeunen2021top,mansoury2021unbiased}. These works provide experimental evaluations of bandit algorithms with fairness constraints,  but they do not provide theoretical guarantees. \citet{wang2021fairness} also consider fairness of exposure in bandits, but without ranking.

Compared to this literature on dynamic ranking, we decided to disentangle the problem of learning the user preferences from the problem of generating the rankings online while optimizing a global exposure objective. We obtain a solution to the ranking problem that is more general than what was proposed before, with stronger theoretical guarantees. Our approach unlocks the problem of online ranking with a global objective function, and we believe that our approach is a strong basis for future exploration/exploitation algorithms.

We studied online ranking in a stationary environment. Several works consider multi-step recommendations scenarios with dynamic models of content production \citep{mladenov2020optimizing,zhan2021towards}. They study the effect of including an exposure objective on the long-term user utilities, but they do not focus on how to efficiently generate rankings.

\paragraph{Relationship to Frank-Wolfe algorithms} The problem of inferring ranking lies in between convex bandit optimization \citep[see ][and references therein]{berthet2017fast} and stochastic optimization. Our problem is easier than bandit optimization since the function is known -- at least partially, and in all cases there is no need for active exploration. The main ingredient we add to the convex bandit optimization literature is the multi-user structure, where parameters are decomposed into several blocks that can only be updated one at a time, while optimizing for a non-decomposable objective function. The similarity with the bandit optimization algorithm of \citet{berthet2017fast} is the usage of the Frank-Wolfe algorithm to generate a deterministic decision at each step while implicitly optimizing in the space of probability distributions.

Our algorithm is a Frank-Wolfe algorithm with a stochastic gradient \citep{hazan2012projection,lafond2015online} and block-separable constraints \citep{lacoste2013block,kerdreux2018frank}. The difference with this line of work is twofold.
First, the distribution $\w$ is not necessarily uniform. Second, in our case, different users have different ``stepsizes'' for their parameters (the stepsize is $\frac{1}{\cti}$ for the user $i$ sampled at time $t$), rather than a single predefined stepsize. These two aspects complicate the analysis compared to that of the stochastic Frank-Wolfe with block-separable constraints of \citet{lacoste2013block}.
}
{
\section{Conclusion and discussion}\label{sec:discussion}
We presented a general approach to online ranking by optimizing trade-offs between user performance and fairness of exposure. The approach only assumes the objective function is concave and smooth. We provided three example tasks involving fairness of exposure, and the scope of the algorithm is more general. For instance, it also applies to the formulation of \citet{do2021two} for reciprocal recommendation tasks such as dating applications.

Despite the generality of the framework, there are a few technical limitations that could be addressed in future work. First, the assumption of the position-based model \eqref{eq:position_based} is important in the current algorithmic approach, because it yields the linear structure with respect to exposure that is required in our Frank-Wolfe approach. Dealing with more general cascade models \citep{craswell2008experimental,mansoury2021unbiased} is an interesting open problem. Second, we focused on the problem of generating rankings, assuming that (user, item) values $\muij$ are given by an oracle and are stationary over time. Relatedly to this stationarity assumption, we ignored the feedback loops involved in recommendation. These include feedback loops due to learning from prior recommendations \citep{bottou2013counterfactual}, the impact of the recommender system on users' preferences themselves \citep{kalimeris2021preference}, as well as the impact that fairness interventions on content production \citep{mladenov2020optimizing}. Third, our approach to balanced exposure is based on the knowledge of a discrete sensitive attribute of users. Consequently, this criterion cannot be applied when there are constraints on the direct usage of the sensitive attribute within the recommender system, when the sensitive attribute is not available, or when the delineation of groups into discrete categories is not practical or ethical \citep{tomasev2021fairness}. 

Finally, while we believe fairness of exposure is an important aspect of fairness in recommender systems, it is by no means the only one. For instance, the alignment of the system's objective with human values \citep{stray2021you} critically depends on the definition of the quality of a recommendation, the values $\muij$ in our framework. The fairness of the underlying system relies on careful definitions of these $\muij$s and on unbiased estimations of them from user interactions -- in particular, taking into account the non-stationarities and feedback loops mentioned previously.
}

\section*{Acknowledgements}
The authors thank Alessandro Lazaric and David Lopez-Paz for their feedback on the paper.

\bibliographystyle{unsrtnat}
\bibliography{references}

%%
%% If your work has an appendix, this is the place to put it.
\appendix
{
\section{Proof of Theorem \ref{thm:boundgeneral}}
\label{sec:proof:thm:boundgeneral}

In this section, we prove Theorem \ref{thm:boundgeneral}.

\subsection{Preliminary remarks}

We make here some preliminary observations that we take for granted in the proof. These are well-known in the analysis of variants of Frank-Wolfe.

We start with an observation that is crucial to the analysis of Frank-Wolfe algorithms, which is the following direct consequence of the concavity and differentiability of $\obj$:
\begin{align}\label{eq:regsmallerthan}
    \objwmax-\objw \leq \max_{\bar{\a}\in\simpA^n} \dotp{\nabla \objw(\pi)}{\bar{\a}}
\end{align}

The second observation is that the block-separable structure in Frank-Wolfe allows to solve for each user independently \citep[also see][Eq. 16]{lacoste2013block}:
\begin{align}\label{eq:decomposemax}
    \forall\pi\in\simpA^n,~~ \max_{\bar{\a}\in\simpA^n} \dotp{\nabla \objw(\pi)}{\bar{\a}} = \sum_{i=1}^n \max_{\bar{\a}\in\simpA} \dotp{\nablai \objw(\pi)}{\bar{\a}} = \sum_{i=1}^n \max_{\a\in\arms} \dotp{\nablai \objw(\pi)}{\a}.
\end{align}
These equalities are straightforward, and the last inequality comes from the fact that for a linear program solved over a polytope, there exists an extreme point of the polytope that is optimal.

{%%% block for temporary macros

The second observation relates to the use of approximate gradients \citep{lafond2015online,berthet2017fast}. Recall that $\boundA = \max_{\a\in\arms} \norm{\a}_1$ is the maximum $1$-norm of arms. Let $\g, \gh\in\Re^m$ and let \begin{align*}
    \aopt&
    \in\argmax_{\a\in\arms} \dotp{\g}{\a}& 
    \ha&
    \in\argmax_{\a\in\arms} \dotp{\gh}{\a}
\end{align*}
Then
\begin{align}\label{eq:approxgrad}
    \dotp{\g}{\ha} & = \dotp{g}{\aopt} + \dotp{\g}{\ha-\aopt} =
    \dotp{\g}{\aopt} + \underbrace{\dotp{\gh}{\ha-\aopt}}_{\geq 0} + \dotp{\g-\gh}{\ha-\aopt}
     \geq \dotp{\g}{\aopt} - 2\boundA\norm{\g-\gh}_\infty
\end{align}
%%% end of temporary macros
}

\subsection{Proof of Thm~\ref{thm:boundgeneral}}

In the remainder, we denote by $\expect_t[X]$ the conditional expectation $\expect[X|i_1, ..., \it]$.

Let $t\geq 1$. to simplify notation, we use the following shorcuts:
\begin{align}
    \git&=\frac{1}{\wi} \frac{\partial \objw(\pihtmo)}{\partial \pii},
    & \aoptit &\in\argmax_{\a\in\arms} \dotp{\git}{\a}, & 
    \ahit &\in\argmax_{\a\in\arms} \dotp{\ghit}{\a}.
\end{align}

We also denote by $\objwpii(\pii')$ the partial function with respect to $\pii$:
\begin{align}
    \objwpii(\pii') = \objw(\pione, \ldots, \piimo, \pii', \ldots, \piipo, \ldots, \pin).
\end{align}

We now start the proof.
First, let us fix $\itpo$ and notice that $\pihtpo$ is such that:
\begin{align}
    \pihitpo = \begin{cases}
    \pihit&\text{~if~} i\neq\itpo\\
    \pihit + \frac{1}{\cti+1}(\ahitpo - \pihit)&\text{~if~} i=\itpo\\
    \end{cases}
\end{align}

%% begin temp macros
{

Thus, only $\pihitpot$ changes. 
Let $\curvif = \frac{1}{2}\lipgwi \max_{\a,\a'\in\arms}\norm{\a-\a'}_2^2 \leq \lipgwi\boundA$ because we assumed $\forall j, 0\leq \a_j\leq 1$.\footnote{In more details: $\forall j, 0\leq \a_j\leq 1$ implies $\norm{\a-\a'}_2^2 = \sum_{j=1}^m (\a_j-\a'_j)^2 \leq \sum_{j=1}^m |\a_j-\a'_j| \leq 2\max_{\a\in\arms}\norm{\a}_1=2\boundA$.}

By the concavity of $\objwpihtitpo$ and its Lipschitz continuous gradients, we have \citep[see e.g.][Sec. 4.1]{bottou2018optimization}:
\begin{align}
    \objwpihtitpo\big(\pihitpotpo\big) \geq \objwpihtitpo\big(\pihitpot\big) 
    + \dotp[\Big]{\w_{\itpo}\gitpotpo}{\frac{1}{\ctitpo+1}\big(\ahitpotpo - \pihitpot\big)}
    - \frac{\w_{\itpo}}{\big(\ctitpo+1\big)^2}\curvitpof
\end{align}

Let $\regt = \objwmax - \objw(\pihtpo)$. Noticing that given $i_1, ..., \itpo$, we have $\objwpihtitpo\big(\pihitpotpo\big) = \objw(\pihtpo)$, we can rewrite 
\begin{align}
    \regtpo \leq \regt
    - \dotp[\Big]{\gitpotpo}{\frac{\w_{\itpo}}{\ctitpo+1}\big(\ahitpotpo - \pihitpot\big)}
    + \frac{\w_{\itpo}}{\big(\ctitpo+1\big)^2}\curvitpof
\end{align}

}
%% end temp macros

Taking the expectation over $\itpo$ (still conditional to $i_1, ..., \it$) gives:
\begin{align}\label{eq:condexpreg}
    \expect_t[\regtpo] \leq & \regt 
    \underbrace{ - \sum_{i=1}^n\dotp[\Big]{\w_i\gitpo}{\frac{\w_i}{\cti+1}\big(\ahitpo - \pihit\big)}}_{\calRt}
    + \sum_{i=1}^n\frac{\w_i^2}{\big(\cti+1\big)^2}\curvif
\end{align}

\subsubsection{Step 1: stepsize} 
Let $\hgammaitpo = \frac{\w_i}{\cti+1}$ and $\gammatpo = \frac{1}{t+1}$. Using $\hgammaitpo = \gammatpo + (\hgammaitpo - \gammatpo)$, $\norm{\ahitpo - \pihit}_1 \leq 2\boundA$ and $\norm{\gitpo}_\infty \leq \boundigw$ in $\calRt$, we have:
\begin{align}\label{eq:approxgammastep}
\calRt \leq - \gammatpo\sum_{i=1}^n\dotp[\Big]{\w_i\gitpo}{\ahitpo - \pihit} 
 + 2\boundA\underbrace{\sum_{i=1}^n \w_i\big| \hgammaitpo - \gammatpo\big| \boundigw}_{\calGt}
\end{align}

\paragraph{Step 2: approximate gradients} 
Now using the fact that $\ahitpo$ maximizes the dot product with the approximate gradient $\ghitpo$ and using the inequality \eqref{eq:approxgrad} in \eqref{eq:approxgammastep}, we obtain:
\begin{align}\label{eq:approxgradstep}
\calRt \leq\underbrace{ - \gammatpo\sum_{i=1}^n\dotp[\Big]{\w_i\gitpo}{\aoptitpo - \pihit}}_{\leq - \gammatpo\regt \text{ ~~by \eqref{eq:regsmallerthan} and \eqref{eq:decomposemax}}} + 2\gammatpo\boundA\underbrace{\sum_{i=1}^n\w_i\norm{\gitpo-\ghitpo}_\infty}_{\calDt} + 2\boundA\calGt
\end{align}
Plugging into \eqref{eq:condexpreg}, we finally obtain (recall $\gammatpo = \frac{1}{t+1}$):
\begin{align}
    \expect_t[\regtpo] \leq & (1-\gammatpo)\regt 
    + 2\gammatpo\boundA\calDt+2\boundA\calGt+\gammatpo\underbrace{\sum_{i=1}^n\frac{\w_i^2(t+1)}{\big(\cti+1\big)^2}\curvif}_{\calCt}
\end{align}
Taking the full expectation over $i^{(1)}, ..., \it$, using the notation $\pregt=\expect[\regtpo]$ and dividing by $\gammatpo$ yields:
\begin{align}
    (t+1)\pregtpo \leq & t\pregt 
    + \expect\big[\calCt+2\boundA\calDt+2(t+1)\boundA\calGt\big]
\end{align}

We separately bound each term in order in Lemmas \ref{lem:Ct}, \ref{lem:Dt} and \ref{lem:Gt}, we obtain:
\begin{align}
    t\pregt &\leq \sum_{\tp=0}^{t-1}\Big(2\frac{\sum_{i=1}^n \curvif}{\tau+1} + \frac{3\boundA\sum_{i=1}^n \sqrt{\w_i}\boundigw}{\sqrt{\tp+1}}+\frac{2\boundA\sum_{i=1}^n \boundigw}{\tp+1}\Big) + 4\boundA\sqrt{t}\sum_{i=1}^n\w_i\boundegi\\ 
    &\leq 2\sum_{i=1}^n (\curvif+\boundA\boundigw)\ln(e t) + 6\boundA\sum_{i=1}^n (\boundigw+\boundegi)\sqrt{\wi t}
\end{align}
using $\sum_{\tp=1}^{t}\frac{1}{\sqrt{\tp}} \leq 2\sqrt{t}$ as in Lemma \ref{lem:Dt}.

\subsection{Technical lemmas}\label{sec:proofs:lemma}

These lemmas depend heavily on the following standard equality regarding a Binomial variable with success probability $p$ and $t$ trials \citep{chao1972negative}:
\begin{align}\label{eq:expectation_one_over_binomial}
    \expect_{X\sim{\rm Bin}(p, t)} \Big[\frac{1}{X+1}\Big] = \frac{1}{p(t+1)}(1-(1-p)^{t+1}) \leq \frac{1}{p(t+1)}.
\end{align}

\begin{lemma}\label{lem:Ct}
\begin{align}
    \expect[\calCt] = \sum_{i=1}^n\curvif\expect\Big[\frac{\w_i^2(t+1)}{\big(\cti+1\big)^2}\Big] \leq 2\frac{\sum_{i=1}^n \curvif}{t+1}
\end{align}
\end{lemma}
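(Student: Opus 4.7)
The plan is to bound each term $\expect[\wi^2(t{+}1)/(\cti{+}1)^2]$ separately and then sum, using the fact that, under the i.i.d.\ sampling $\itp \sim {\rm Categorical}(\w)$, the count $\cti$ is a Binomial random variable with parameters $(\wi, t)$. By linearity of expectation, it then suffices to show
\[
\expect\Big[\frac{\wi^2(t+1)}{(\cti+1)^2}\Big] \leq \frac{2}{t+1},
\]
after which summing over $i$ with weights $\curvif$ immediately yields the claim.

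To obtain this bound, I would first invoke the elementary inequality $\frac{1}{(X+1)^2} \leq \frac{2}{(X+1)(X+2)}$, valid for all integers $X \geq 0$ since $X+2 \leq 2(X+1)$. This converts the awkward quadratic denominator into a form whose Binomial expectation has a clean closed form, in the same spirit as \eqref{eq:expectation_one_over_binomial} but one order higher.

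Next, I would compute $\expect[1/((\cti+1)(\cti+2))]$ using the combinatorial identity
\[
\frac{1}{(k+1)(k+2)}\binom{t}{k} = \frac{1}{(t+1)(t+2)}\binom{t+2}{k+2},
\]
which re-indexes the Binomial probability mass function and allows the sum to be recognized as the tail of a Binomial$(\wi, t+2)$ distribution divided by $\wi^2(t+1)(t+2)$. Discarding the nonnegative boundary terms $(1-\wi)^{t+2}$ and $(t+2)\wi(1-\wi)^{t+1}$ yields $\expect[1/((\cti+1)(\cti+2))] \leq 1/[\wi^2(t+1)(t+2)]$. Multiplying by $2\wi^2(t+1)$ then gives $2/(t+2) \leq 2/(t+1)$, as needed.

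There is no real obstacle here; the argument is a purely mechanical generalization of \eqref{eq:expectation_one_over_binomial} to the second inverse moment, and the only ``trick'' is the choice of the sandwich inequality $\frac{1}{(X+1)^2} \leq \frac{2}{(X+1)(X+2)}$ that makes the Binomial sum telescope cleanly under the combinatorial identity above. The normalization by $\wi$ built into the algorithm's stepsize is what ensures the final bound is independent of the (possibly very small) activities $\wi$.
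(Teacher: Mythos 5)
Your proposal is correct and follows essentially the same route as the paper: both reduce to the second inverse moment of a Binomial via the inequality $\frac{1}{(X+1)^2}\leq\frac{2}{(X+1)(X+2)}$ and then compute the resulting expectation by re-indexing the binomial sum, exactly as in the derivation of \eqref{eq:expectation_one_over_binomial}. Your explicit combinatorial identity just fills in the "direct computation" the paper leaves implicit, and even yields the marginally sharper constant $2/(t+2)$.
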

\begin{proof}
We use the following result, where ${\rm Bin}(w_i, t)$ denotes the binomial distribution with probability of success $w_i>0$ and $t$ trials:
\begin{align}
    \expect_{X\sim{\rm Bin}(w_i, t)} \Big[\frac{1}{(X+1)^2}\Big] \leq \frac{2}{w_i^2(t+1)^2}
\end{align}
The proof uses $\frac{1}{(X+1)^2} \leq \frac{2}{(X+1)(X+2)}$ and the result is obtained by direct computation of the expectation. The arguments are the same as those required to obtain the standard equality \eqref{eq:expectation_one_over_binomial}.

The result follows by noticing that $\cti\sim {\rm Bin}(\w_i, t)$, using the inequality above if $\w_i\neq 0$ and noticing that the inequality $\frac{\w_i^2(t+1)}{(\cti+1)^2} \leq \frac{2}{t+1}$ holds when $\w_i=0$.
\end{proof}

\begin{lemma}\label{lem:Dt}
{

Using \eqref{eq:approximategradient}, we have:
\begin{align}
    \sum_{\tp=0}^{t-1} \expect[\calDt] = \sum_{i=1}^n\w_i \sum_{\tp=1}^t \expect\Big[\norm{\gitp-\ghitp}_\infty\Big] \leq  2\sqrt{t}\sum_{i=1}^n\w_i\boundegi
\end{align}
}
\end{lemma}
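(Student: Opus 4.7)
The plan is very short: the lemma follows by linearity of expectation, the assumed per-step bound \eqref{eq:approximategradient} on the approximation error of the gradient, and the standard estimate $\sum_{\tp=1}^t 1/\sqrt{\tp}\le 2\sqrt{t}$.

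First I would unpack the definition of $\calDt$. From its introduction in \eqref{eq:approxgradstep}, we have $\calDt=\sum_{i=1}^n\w_i\norm{\gitpo-\ghitpo}_\infty$, so after summing over $\tp=0,\ldots,t-1$ and reindexing $\tp\gets\tp+1$, the equality in the lemma statement,
\begin{equation*}
\sum_{\tp=0}^{t-1}\expect[\calDt]=\sum_{i=1}^n\w_i\sum_{\tp=1}^t\expect\bigl[\norm{\gitp-\ghitp}_\infty\bigr],
\end{equation*}
is immediate by linearity of expectation and the fact that $\wi$ is deterministic.

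Next, recall that by the shorthand fixed at the start of the proof, $\gitp=\pdevwi(\pih^{(\tp-1)})$. The hypothesis \eqref{eq:approximategradient} of Theorem~\ref{thm:boundgeneral} then gives, for every $\tp\ge 1$ and every $i\in\intint{n}$,
\begin{equation*}
\expect\bigl[\norm{\ghitp-\gitp}_\infty\bigr]\le \frac{\boundegi}{\sqrt{\tp}},
\end{equation*}
where the expectation is over $i^{(1)},\ldots,i^{(\tp-1)}$ and we appeal to the tower property to write the bound in terms of the full expectation.

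Summing this bound in $\tp$ and using the integral comparison $\sum_{\tp=1}^t \tp^{-1/2}\le 1+\int_1^t s^{-1/2}\,ds=2\sqrt{t}-1\le 2\sqrt{t}$ yields $\sum_{\tp=1}^t \expect[\norm{\gitp-\ghitp}_\infty]\le 2\sqrt{t}\,\boundegi$. Plugging this into the identity above and weighting by $\wi$ gives the claimed bound $2\sqrt{t}\sum_{i=1}^n \wi\boundegi$. There is no real obstacle: the work reduces to being careful with the reindexing of $\tp$ and with the observation that $\pih^{(\tp-1)}$ is measurable with respect to the sigma-algebra generated by $i^{(1)},\ldots,i^{(\tp-1)}$, so the hypothesis \eqref{eq:approximategradient} can be invoked directly on each term of the sum.
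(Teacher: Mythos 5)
Your proof is correct and takes essentially the same route as the paper's: unpack $\calDt$, apply the hypothesis \eqref{eq:approximategradient} to each term after reindexing, and use $\sum_{\tp=1}^t \tp^{-1/2}\leq 2\sqrt{t}$. The only (immaterial) difference is that you establish this last estimate by integral comparison while the paper invokes induction.
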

\begin{proof}
It is straightforward to show by induction that $\sum_{\tp=1}^t \frac{1}{\sqrt{\tp}}\leq 2\sqrt{t}$ and recall that $\sum_{i=1}^n\w_i=1$.
\end{proof}

\begin{lemma}\label{lem:Gt}
\begin{align}
    (t+1)\expect[\calGt] = \sum_{i=1}^n \w_i\boundigw\expect\Big[ \Big|\frac{\w_i(t+1)}{\cti+1} - 1\Big|\Big] \leq \frac{3 \sum_{i=1}^n \sqrt{\w_i}\boundigw}{2\sqrt{t+1}}+\frac{ \sum_{i=1}^n\boundigw}{t+1}
\end{align}
\end{lemma}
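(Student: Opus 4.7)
The plan is to peel off the two sources of discrepancy between $w_i(t+1)/(\cti+1)$ and $1$ separately: the random fluctuation of $\cti$ around its mean $w_i t$, and the small deterministic offset coming from the ``$+1$'' in the denominator versus the ``$(t+1)$'' in the numerator.

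First I would rewrite $\big|\frac{w_i(t+1)}{\cti+1} - 1\big| = \frac{|w_i(t+1) - (\cti+1)|}{\cti+1}$, then apply the triangle inequality after the algebraic split
\begin{equation*}
   w_i(t+1) - (\cti+1) = (w_i t - \cti) + (w_i - 1),
\end{equation*}
which gives the bound $\big|\frac{w_i(t+1)}{\cti+1} - 1\big| \leq \frac{|w_i t - \cti|}{\cti+1} + \frac{|w_i-1|}{\cti+1}$. Since $w_i\in(0,1]$ the second term is at most $1/(\cti+1)$, and taking expectations it is controlled by the standard identity \eqref{eq:expectation_one_over_binomial}, giving $\expect[1/(\cti+1)]\leq 1/(w_i(t+1))$. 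Multiplying by $w_i\boundigw$ and summing over $i$ yields exactly the $\sum_i \boundigw/(t+1)$ term in the stated bound.

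For the main, fluctuation term I would use Cauchy--Schwarz to separate the centered deviation from the inverse count:
\begin{equation*}
    \expect\Big[\frac{|w_i t - \cti|}{\cti+1}\Big] \leq \sqrt{\expect\big[(w_i t - \cti)^2\big]}\cdot\sqrt{\expect\big[1/(\cti+1)^2\big]}.
\end{equation*}
Since $\cti\sim\mathrm{Bin}(w_i,t)$, the first factor is $\sqrt{w_i(1-w_i)t}\leq\sqrt{w_i(t+1)}$. The second factor is the quantity already bounded in the proof of Lemma \ref{lem:Ct} via $\frac{1}{(X+1)^2}\leq\frac{2}{(X+1)(X+2)}$, yielding $\expect[1/(\cti+1)^2]\leq 2/(w_i^2(t+1)^2)$. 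Combining gives $\expect\big[|w_i t - \cti|/(\cti+1)\big]\leq \sqrt{2}/(\sqrt{w_i}\sqrt{t+1})$. Multiplying by $w_i\boundigw$, summing over $i$, and using $\sqrt{2}\leq 3/2$ produces the leading $\frac{3}{2\sqrt{t+1}}\sum_i\sqrt{w_i}\boundigw$ term.

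The main obstacle here is not analytical but bookkeeping: one must choose the additive split so that Cauchy--Schwarz delivers the right $\sqrt{w_i}/\sqrt{t+1}$ scaling on the stochastic piece and a clean $1/(t+1)$ on the deterministic remainder, without generating cross terms that would degrade the rate. The assumption $w_i>0$ from Section~3 ensures the bounds on inverse moments of $\cti+1$ are well defined; the case $w_i=1$ is degenerate (then $\cti=t$ deterministically and both terms vanish), so no separate treatment is needed.
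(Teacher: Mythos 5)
Your proposal is correct and follows essentially the same route as the paper: the identical split of $\big|\tfrac{w_i(t+1)}{\cti+1}-1\big|$ into the fluctuation term $\tfrac{|w_i t-\cti|}{\cti+1}$ plus the $\tfrac{|1-w_i|}{\cti+1}$ remainder, with the same two ingredients (the binomial variance and the bound $\expect[1/(\cti+1)^2]\leq 2/(w_i^2(t+1)^2)$ from Lemma~\ref{lem:Ct}). The only difference is cosmetic: you decouple the two factors via Cauchy--Schwarz where the paper uses the pointwise inequality $ab\leq\tfrac{\lambda}{2}a^2+\tfrac{1}{2\lambda}b^2$ with $\lambda=\sqrt{w_i(t+1)}$, which is the same device up to optimization of $\lambda$ and even yields the marginally better constant $\sqrt{2}$ in place of $3/2$.
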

\begin{proof}
Let $i\in\intint{n}$ and $t>0$ we have
\begin{align}\label{eq:bound_on_one_over}
    \expect\Big[ \Big|\frac{\w_i(t+1)}{\cti+1} - 1\Big|\Big] 
    \leq \underbrace{\expect\Big[\frac{|\w_i t-\cti|}{\cti+1}\Big]}_{\caltGt} + \underbrace{\expect{\Big[\frac{|1-\w_i|}{\cti+1}}\Big]}_{\leq \frac{1}{\w_i(t+1)} \text{ because } \cti\sim{\rm Bin}(\w_i, t)}
\end{align}
Focusing on $\caltGt$ and writing it as:
\begin{align}
    \caltGt = \expect\Big[\underbrace{\frac{|(\w_i+\cti) - \w_i (t+1)|}{\w_i (t+1)}}_{= a}\times\underbrace{\frac{\w_i (t+1)}{\cti+1}}_{=b}\Big]
\end{align}
using $ab\leq \frac{1}{2} \lambda a^2+\frac{1}{2} \frac{b^2}{\lambda}$ with $\lambda=\sqrt{\wi}\sqrt{t+1}$ we obtain:
\begin{align}
    \caltGt \leq \frac{\sqrt{t+1}}{2\w_i^{3/2}} \underbrace{\expect\Big[\Big(\frac{\w_i+\cti}{t+1}-\w_i\Big)^2\Big]}_{\leq\frac{\w_i(1-\w_i)}{t+1} \text{(variance of sums of independent r.v.s)}}
    + \frac{1}{2\sqrt{\wi}\sqrt{t+1}} \underbrace{\expect\Big[\frac{(\w_i (t+1))^2}{\big(\cti+1\big)^2}\Big]}_{\leq 2 \text{ by Lemma \ref{lem:Ct}}}.
\end{align}
And the result follows.
\end{proof}

}
{
\section{Proof of Proposition \ref{prop:approx_gradients}}
\label{sec:proof:approx_gradients}

We give here the computations that lead to the bounds of \ref{prop:approx_gradients}. The bounds themselves mostly rely on calculations of bounds on second-order derivatives, but they rely on a set of preliminary results regarding the deviation of multinomial distributions. We first give these results, and then detail the calculations for each fairness objective in separate subsections.

\textbf{In this section, we drop the superscript in $t$. All quantities with a hat $\hat{.}$ are implicitly taken at time step $t$, i.e. they should be read $\hat{.}^{(t)}$. This is also the case for $\pih$, which should read $\piht$.} We also remind that $\muij\in[0,1]$ so user utilities and item qualities are also in $[0,1]$, and $\norm{\w}_\infty \leq 1$ so all exposures are also in $[0,1]$.

\subsection{Main inequalities}
All our bounds are based on the following result \citep[Thm. 1]{han2015minimax}:
\begin{align}\label{eq:expectation_w_norm}
    \Big(\multiliner{8em}{1-norm distance between $\wh$ and $\w$}\Big)&&\expect\Big[\norm[\big]{\wh-\w}_1\Big] \leq \sqrt{\frac{n-1}{t}},
\end{align}
where the expectation is taken over the random draws of $\ione, \ldots, \it$.

From \eqref{eq:expectation_w_norm} we obtain a bound on the deviation of online estimates of exposures from their true value:
\begin{align}\label{eq:deviation_exposure}
\expect\Big[\big|\vhj - \vj(\pih)\big|\Big]  
&\leq 
\expect\Big[\sum_{i=1}^n\big| \whi-\wi\big| \pihij \Big] 
\leq \expect\Big[\norm[\big]{\wh-\w}_1\Big] \leq \sqrt{\frac{n-1}{t}}
\end{align}
For balanced exposure, we need a refined version of \eqref{eq:expectation_w_norm} regarding the convergence of per-group user activities:
\begin{lemma} For every group $\group$
\begin{align}
\expect\bigg[\Big|\sum_{i\in\group} \frac{\ci}{\cg} - \frac{\wi}{\wsumg}\Big|\bigg] \leq \sqrt{\frac{2(\card{\group}-1)}{\wsumg t}}
\end{align}
\end{lemma}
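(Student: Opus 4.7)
The plan is to reduce to the base inequality~\eqref{eq:expectation_w_norm} by conditioning on $\cg$, and then control the random sample-size $\cg$ using the same tools as in Lemma~\ref{lem:Gt}. First I would observe that given the indicator sequence $(\indic{\itp \in \group})_{\tp \leq t}$---and in particular given $\cg$---the subsequence of observed user indices that lie in $\group$ is distributed as $\cg$ i.i.d.\ draws from the Categorical law on $\group$ with parameters $(\wi/\wsumg)_{i \in \group}$. Hence $(\ci/\cg)_{i \in \group}$ is exactly its empirical distribution, and applying \eqref{eq:expectation_w_norm} conditionally with dimension $\card{\group}$ and sample size $\cg \geq 1$ yields
\begin{equation*}
\expect\bigg[\sum_{i \in \group} \Big|\frac{\ci}{\cg} - \frac{\wi}{\wsumg}\Big| \,\bigg|\, \cg \bigg] \leq \sqrt{\frac{\card{\group}-1}{\cg}}.
\end{equation*}

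Next I would take the expectation over $\cg$, which is ${\rm Bin}(\wsumg, t)$-distributed. Following the template of Lemma~\ref{lem:Gt}, I would use the elementary bound $\indic{\cg \geq 1}/\sqrt{\cg} \leq \sqrt{2/(\cg+1)}$ (valid since $\cg + 1 \leq 2\cg$ on $\{\cg \geq 1\}$), then Cauchy--Schwarz $\expect[(\cg+1)^{-1/2}] \leq \sqrt{\expect[(\cg+1)^{-1}]}$, and finally \eqref{eq:expectation_one_over_binomial} to get $\expect[(\cg+1)^{-1}] \leq 1/(\wsumg(t+1))$. Chaining these three steps produces the bound $\sqrt{2(\card{\group}-1)/(\wsumg(t+1))}$ on the contribution from $\{\cg \geq 1\}$, which is already no larger than the stated $\sqrt{2(\card{\group}-1)/(\wsumg t)}$.

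The only remaining subtlety is the degenerate event $\{\cg = 0\}$, on which $\ci/\cg$ is undefined. I would follow the same convention as in~\eqref{eq:balancedupdate}, where empty-group quantities are set to zero, so that the summand vanishes there and contributes nothing to the expectation. If instead one sets the ratio to $0$ on $\{\cg=0\}$, the event contributes at most $(1-\wsumg)^t \leq e^{-\wsumg t}$, which can be absorbed into the announced polynomial bound through a short case split---using the trivial bound of $2$ on the total-variation distance between probability measures when $\wsumg t$ is small (so that $\sqrt{2(\card{\group}-1)/(\wsumg t)}$ is already $\geq 2$), and the exponential decay of $(1-\wsumg)^t$ against the polynomial $1/\sqrt{\wsumg t}$ when $\wsumg t$ is large. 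The main obstacle I anticipate is formalizing the conditional multinomial decomposition under the random sample-size $\cg$ cleanly enough to justify the conditional application of~\eqref{eq:expectation_w_norm}; once that is in place, the remainder of the argument is a direct adaptation of Lemma~\ref{lem:Gt}.
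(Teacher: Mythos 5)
Your proposal is correct and follows essentially the same route as the paper: condition on which rounds fall in $\group$, apply \eqref{eq:expectation_w_norm} conditionally with sample size $\cg$, pass from $1/\cg$ to $2/(\cg+1)$, use Jensen (your Cauchy--Schwarz step is the same inequality), and conclude with \eqref{eq:expectation_one_over_binomial}. The only difference is the treatment of $\{\cg=0\}$: rather than your separate exponential-versus-polynomial case split, the paper absorbs that event directly by bounding the conditional value there by $\indic{\cg=0}\le\sqrt{2(\card{\group}-1)/(\cg+1)}$ (using $\sqrt{2}>1$), which unifies the two cases before taking the outer expectation.
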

\begin{proof}
The lemma is a consequence of \eqref{eq:expectation_w_norm}, by first conditioning on the groups sampled at each round. Let $\groupone, \ldots, \groupt$ correspond to the sequence of groups sampled at each round. Since given the group, the sampled users are i.i.d. within that group, we can use \eqref{eq:expectation_w_norm}:
\begin{align}
    \expect\bigg[\Big|\sum_{i\in\group} \frac{\ci}{\cg} - \frac{\wi}{\wsumg}\Big|\bigg  | \groupone, \ldots, \groupt\bigg] \leq \indic{\cg>0}\sqrt{\frac{\card{\group}-1}{\cg}} + \indic{\cg=0}.
\end{align}
Notice that in the equation above we have $\ctg = \sum_{\tp=1}^t \indic{\grouptp = \group}$.

Using $\frac{\card{\group}-1}{\cg} \leq 2\frac{\card{\group}-1}{\cg+1}$ when $\cg>0$ and $\sqrt{2}>1$ when $\cg=0$, taking the expectation over the random draws of groups, we obtain:
\begin{align}
    \expect\bigg[\Big|\sum_{i\in\group} \frac{\ci}{\cg} - \frac{\wi}{\wsumg}\Big|\bigg] \leq \expect\Big[\sqrt{2\frac{\card{\group}-1}{\cg+1}}\Big]
    \leq 
    \sqrt{2\expect\Big[\frac{\card{\group}-1}{\cg+1}\Big]}
\end{align}
by Jensen's inequality. The result follows from the expectation of $\frac{1}{X+1}$ when $X$ follows a binomial distribution \eqref{eq:expectation_one_over_binomial}.
\end{proof}
The lemma above allows us to extend \eqref{eq:deviation_exposure} to exposures within a group:
\begin{align}\label{eq:deviation_exposure_withingroup}
    \expect\Big[\big|\vhgj - \vgj(\pih)\big|\Big]  \leq \sqrt{\frac{2(\card{\group}-1)}{\wsumg t}} 
    && 
    \text{and~~}
    \expect\Big[\big|\vhavgj - \vavgj(\pih)\big|\Big]  \leq \frac{1}{\card{\groups}}\sum_{\group\in\groups}\sqrt{\frac{2(\card{\group}-1)}{\wsumg t}} 
\end{align}

\subsection{Two-sided fairness (Alg.~\ref{alg:twosided})}
We have
$\pdevwij\big(\pih\big) = \psi'_{\alpha_1}\big(\ui(\pih\big)\mui + \frac{\beta}{m}\psi'_{\alpha_2}\big(\vj(\pih)\big)$, and, by definition, $\uhi=\ui(\pih)$ and $\pdevwhij(\pih) =\psi'_{\alpha_1}(\uhi)\muij + \frac{\beta}{m}\psi'_{\alpha_2}(\vhj)$.
We thus have:
\begin{align}
    \Big|\pdevwij\big(\pih\big) -\pdevwhtij(\pih)\Big|
    &=
    \frac{\beta}{m}\Big |\psi'_{\alpha_2}\big(\vj(\pih)\big) - \psi'_{\alpha_2}(\vhj)\Big | 
    \leq \frac{\beta\norm{\psi''_{\alpha_2}}_\infty}{m} 
    \Big|\vj(\pih)- \vhj\Big| \\
\end{align}
Taking the expectation over $\ione, \ldots,\it$ and using \eqref{eq:deviation_exposure}, we obtain the desired result:
\begin{align}
    \expect\Big[\Big|\pdevwij\big(\pih\big) -\pdevwhij(\pih)\Big|\Big] \leq \frac{\beta\norm{\psi''_{\alpha_2}}_\infty}{m}\sqrt{\frac{n-1}{t}}
\end{align}

\subsection{Quality-weighted exposure (Alg.~\ref{alg:quaexpo})} By similar direct calculations, let
\begin{align}
    \displaystyle \hat{Z} = \sqrt{\eta+\frac{1}{m} \sum_{j=1}^m \Big(\quahavg\vhj-\quahj\norm{\wei}_1\Big)^2}
&&\text{and~~} \displaystyle \hat{Z} = \sqrt{\eta+\frac{1}{m} \sum_{j=1}^m \Big(\quaavg\vj(\pih)-\quaj\norm{\wei}_1\Big)^2}.
\end{align}
\begin{align}\label{eq:starter_qua_approx_gradient}
    \Big|\pdevwij\big(\piht\big) -\pdevwhij(\pih)\Big| 
    &= \frac{\beta}{m}\Big|\frac{\quahavg\vhj - \quahj\norm{\wei}_1}{\hat{Z}}- \frac{\quaavg\vj(\pih) - \quaj\norm{\wei}_1}{Z}\Big|
\end{align}
Notice first that given some $B>0$, for $x\in[-B,B]^m$, the function $h(x) =\sqrt{\eta+\frac{1}{m}\sum_{j=1}^m x_j^2}$ has derivatives bounded by $\frac{B}{m\sqrt{\eta}}$ in $\norm{.}_\infty$. Thus, we have, for every $x, x'\in[-B,B]^m$:
\begin{align}\label{eq:main_simplification}
    \Big|\sqrt{\eta+\frac{1}{m}\sum_{j=1}^m x_j^2} - \sqrt{\eta+\frac{1}{m}\sum_{j=1}^m {x'_j}^2}\Big| \leq \frac{B}{m\sqrt{\eta}}\sum_{j=1}^m |x_j-x'_j|
\end{align}
With $x_j = \quaavg\vj(\pih)-\quaj\norm{\wei}_1$ and $x'_j = \quahavg\vhj-\quahj\norm{\wei}_1$, we have $B\leq 1+\norm{\wei}_1$. Moreover, the bound \eqref{eq:starter_qua_approx_gradient} writes:
\begin{align}\label{eq:details_approx_grad}
    \frac{m}{\beta}\Big|\pdevwij\big(\pih\big) -\pdevwhij(\pih)\Big| 
    &= \Big| \frac{x_j}{h(x)}- \frac{x'_j}{h(x')}\Big| \leq \frac{|x_j-x'_j|h(x)}{h(x)h(x')}
    + |x'_j|\frac{\big| h(x) - h(x')\big|}{h(x)h(x')}\\
    & \leq \frac{|x_j-x'_j|}{\sqrt{\eta}} + \frac{B}{m\eta}\norm{x-x'}_1 
    \leq \frac{1+B}{m\min(\eta,\sqrt{\eta})}\norm{x-x'}_{\infty},
\end{align}
where we used \eqref{eq:main_simplification} and the fact that $h(x) \geq \sqrt{\eta}$ and $h(x') \geq |x'_j|$.
Now, since all exposures and qualities are upper bounded by $1$, notice that
\begin{align}
    |x_j - x'_j| \leq |\quaavg - \quahavg| + |\vhj-\vj(\pih)| + \norm{\wei}_1 |\quaj - \quahj| \leq (2+\norm{\wei}_1)\norm{\wh-\w}_1,
\end{align}
where we used similar calculations as in \eqref{eq:deviation_exposure} to bound for $\quaj - \quahj$ and $\quaavg - \quahavg$.

Putting it all together, and using \eqref{eq:deviation_exposure} for the expectation, we obtain:
\begin{align*}
    \expect\Big[\Big|\pdevwij\big(\pih\big) -\pdevwhij(\pih)\Big|\Big] 
    % \leq 
    % \frac{\beta}{m}\Big(\frac{2+\norm{\w}_1}{\sqrt{\eta}} + \frac{\big(2+\norm{\w}_1\big)^2}{\eta^{3/2}}\Big)\expect\Big[\norm{\wh-\w}_1\Big] 
    \leq 
    \frac{\beta\big(2+\norm{\wei}_1\big)^2}{m\min(\eta,\sqrt{\eta})}\sqrt{\frac{n-1}{t}}.
\end{align*}

\subsection{Balanced exposure} For balanced exposure, let us denote by
\begin{align}
    \hat{Z}_j = \sqrt{\eta + \sum_{\group\in\groups} \Big(\vhgj-\vhavgg\Big)^2}
    &&
    \text{and~~} Z_j = \sqrt{\eta + \sum_{\group\in\groups} \Big(\vgj(\pih)-\vavgj(\pih)\Big)^2}.
\end{align}
We then have\footnote{We notice that in Alg.~\ref{alg:balancedexpo} we have $\frac{t}{\cgi+1}$ rather than $\frac{t+1}{\cgi+1}$. This is because here we are considering $\pdevwhti\big(\piht\big) = \ghitpo$ (see \eqref{eq:which_approximate_gradient}), while the algorithm uses $\ghit$.}
\begin{align}
    \Big|\pdevwij\big(\pih\big) -\pdevwhij(\pih)\Big| 
    &= \frac{\beta}{m}\Big|\frac{t+1}{\cgi+1}\frac{\vhgij - \vhavgj}{ \hat{Z}_j} - \frac{1}{\wsumgofi}\frac{\vgij(\pih) - \vavgj(\pih)}{Z_j} \Big|.
\end{align}
Similarly to quality of exposure, for $x\in[-B,B]^{\card{\groups}}$ let us denote by $h(x) = \sqrt{\eta + \sum_{\group\in\groups} \xg^2}$ for $\xg\in[0,B]^{\card{\groups}}$. We have $|h(x) - h(x')| \leq \frac{B}{\sqrt{\eta}}\sum_{\group\in\groups} |x_\group-x'_\group|$. Moreover, with 
\begin{align}
    \xg =\vhgj - \vhavgj && \xg' = \vgj(\pih) - \vavgj(\pih) && \alpha_j=\frac{t+1}{\cgi+1} && \alpha_j' = \frac{1}{\wsumgofi}
\end{align}
we can use $B=1$, and, using similar steps as \eqref{eq:details_approx_grad} (here the gradients of $h$ are bounded by $\frac{B}{\sqrt{\eta}}$ in infinity norm):
\begin{align}
    \Big|\pdevwij\big(\pih\big) -\pdevwhij(\pih)\Big| 
    &= \frac{\beta}{m} \bigg(\frac{B}{\sqrt{\eta}} |\alpha_i-\alpha'_i| +\alpha'_i\Big|\frac{\xgi}{h(x)} -\frac{\xgi'}{h(x')} \Big|\bigg)\\
    &\leq
    \frac{\beta}{m\wsumgofi\min(\eta,\sqrt{\eta})} \bigg( \wsumgofi|\alpha_j-\alpha'_j| + |\xgi - \xgi'| + \norm{x-x'}_1\bigg)
\end{align}
We now notice that using \eqref{eq:deviation_exposure_withingroup}, we have:
\begin{align}\label{eq:balanced_bounds_onemorn}
    \expect\Big[|\xgi - \xgi'|\Big] \leq \sqrt{\frac{2(\card{\groupofi}-1)}{\wsumgofi t}} + \frac{1}{\card{\groups}}\sum_{\group\in\groups}\sqrt{\frac{2(\card{\group}-1)}{\wsumg t}} 
    && \text{and thus } 
    \expect\Big[\norm{x-x'}_1\Big]\leq 2\sum_{\group\in\groups}\sqrt{\frac{2(\card{\group}-1)}{\wsumg t}} 
\end{align}
We finish the proof with the bound on  $|\alpha_i-\alpha_i'|$:
\begin{align}
    \expect\big[|\alpha_i-\alpha'_i|\big]
    & 
    = \expect\bigg[
    \Big| \frac{t+1}{\cgi+1} - \frac{1}{\wsumgofi}\Big| \Bigg]
    = \frac{1}{\wsumgofi}\expect\bigg[\frac{\big|\wsumgofi + (\wsumgofi-1) \big|}{\cgi+1}\bigg]
\end{align}
Following the same steps as \eqref{eq:bound_on_one_over} in Lem.~\ref{lem:Gt}, we obtain:
\begin{align}
    \expect\bigg[\frac{\big|\wsumgofi + (\wsumgofi-1) \big|}{\cgi+1}\bigg]
    \leq \frac{3}{2\sqrt{\wsumgofi(t+1)}}+\frac{1}{\wsumgofi (t+1)}
\end{align}
Putting it all together, we get:
\begin{align}
   \mathbb{E} \left[ \Big|\pdevwij\big(\pih\big) -\pdevwhij(\pih)\Big| \right] \leq
    \frac{\beta}{m\wsumgofi \min(\eta,\sqrt{\eta})}\bigg(
    \frac{3}{2\sqrt{\wsumgofi(t+1)}} + \frac{1}{\wsumgofi(t+1)}
    +
    4\sum_{\group\in\groups}\sqrt{\frac{2(\card{\group} - 1)}{\wsumg t}}\bigg).
\end{align}
Finally, using $4\sqrt{2(\card{\group} - 1)} + \frac{3}{2} \leq 8\sqrt{\card{\group}}$ as long as $\card{\group}\geq 1$ (we assumed groups are non-empty),\footnote{The function $\group\mapsto 4\sqrt{2(\card{\group} - 1)} + \frac{3}{2} - 8\sqrt{\card{\group}}$ is decreasing and is $\leq 0$ when $\card{\group}=1$.} we obtain 
\begin{align}
   \mathbb{E} \left[ \Big|\pdevwij\big(\pih\big) -\pdevwhij(\pih)\Big| \right] \leq
    \frac{\beta}{m\wsumgofi\min(\eta,\sqrt{\eta})}\bigg(
    \frac{1}{\wsumgofi(t+1)}
    +
    8\sum_{\group\in\groups}\sqrt{\frac{\card{\group}}{\wsumg t}}\bigg).
\end{align}
which simplifies to the desired result when $\eta\leq 1$.

\section{Proof of Corollary \ref{cor:boundalgos}}\label{sec:proof:boundalgos}
When $\eta\leq 1$, and using $A\lesssim B$ as a shorthand for $A=O(B)$, one can bound the constants $D_i$ in Theorem \ref{thm:boundgeneral} as follows.
\begin{itemize}
    \item Alg.~\ref{alg:twosided} (two-sided fairness): Using Proposition \ref{prop:approx_gradients}, $D_i$ taken as follows is sufficient
    $$
    D_i \lesssim \dfrac{\beta\|\psi''_{\alpha_2}\|_\infty\sqrt{n}}{m} 
    \lesssim \dfrac{\beta\eta^{\alpha_2-2}\sqrt{n}}{m} 
    %\lesssim \dfrac{\beta\sqrt{n}}{m} =: C_1.
    $$
    \item Alg.~\ref{alg:quaexpo} (quality-weighted):
    $$
    D_i \lesssim \dfrac{\beta(2+\|b\|_1)^2\sqrt{n}}{m\min(\eta,\sqrt{\eta})}
    % \lesssim \dfrac{\|b\|_1^2\sqrt{n}}{m} 
    \lesssim \dfrac{\|b\|_1^2\beta\sqrt{n}}{m\eta}
    %=: C_2.
    $$
    
    \item Alg.~\ref{alg:balancedexpo} (balanced exposure):
    By euclidean Cauchy-Schwarz inequality, we have
    $$
    \left(\frac{1}{|\mathcal S|}\sum_{s \in \mathcal S} \sqrt{\frac{|s|}{\wsumg}}\right)^2 \le \left(\frac{1}{|\mathcal S|}\sum_s |s|\right)\left(\frac{1}{|\mathcal S|}\sum_s \frac{1}{\wsumg}\right) \le \frac{n}{|\mathcal S|\wming}
    $$
    where $\wming := \min_{s \in \mathcal S}\wsumg$. Thus, we deduce that $\sum_{s \in \mathcal S}\sqrt{|s|/\wsumg} \le \sqrt{n|\mathcal S|/\wming}$, and so
    $$
    D_i \lesssim \dfrac{\beta}{m\wsumgofi\min(\eta,\sqrt{\eta})} \sum_{s \in \mathcal S}\sqrt{\dfrac{|s|}{\wsumg}} \le \dfrac{\beta\sqrt{n|\mathcal S|/\wming}}{m\wsumgofi\min(\eta,\sqrt{\eta})}
    %\lesssim \frac{\beta}{\wsumgofi} \dfrac{\sqrt{n|\mathcal S|}}{m} \frac{1}{\sqrt{\wming}} \le  \dfrac{\beta\sqrt{n|\mathcal S|}}{m} \frac{1}{\wming^{3/2}} =
    %
    \lesssim\frac{\beta\sqrt{n}}{m\eta}\sqrt{\frac{|\mathcal S|}{\wming^3}} 
    %=:C_3.
    $$
\end{itemize}

Thus, the gradient estimates \eqref{eq:approximategradient} required in Theorem \ref{thm:boundgeneral} hold with $D_i= \mathcal O(C_\star)$ for all $i$, where the constants $C_\star$ are given above.

In addition, we can note that 
\begin{itemize}
    \item $\ln(et)/t \ll 1/\sqrt{t}$ and so the first term in \eqref{eq:generalregret} is dominated by the second, hence can be ignored.
    \item The $L_i$'s are bounded independently of $t$ and only impact the $\log(et)/t$ term, so we can ignore them.
    \item $\boundA = \norm{\wei}_1$.
    \item By Jensen's inequality, $(1/n)\sum_i \sqrt{w_i} \le \sqrt{(1/n)\sum_i w_i} = \sqrt{1/n}$, and so $\sum_i \sqrt{w_i} \le \sqrt{n}$.
\end{itemize}
The regret bounds presented in Table~\ref{tab:rates} then follow upon plugging these estimates for $D_i$ into the generic regret bound \eqref{eq:generalregret} of Theorem \ref{thm:boundgeneral}, together with the following bounds for $\boundigw$:
\begin{itemize}
\item Alg.~\ref{alg:twosided} (two-sided fairness) $\boundigw \lesssim \norm{\psi'_{\alpha_1}}_\infty+\frac{\beta}{m}\norm{\psi'_{\alpha_2}}_\infty \leq \eta^{\alpha_1-1} + \frac{\beta}{m}\eta^{\alpha_2-1}$,
\item Alg.~\ref{alg:quaexpo} (quality-weighted) $\boundigw \leq 1 + \frac{\beta}{m\sqrt{\eta}}$,
\item Alg.~\ref{alg:balancedexpo} (balanced exposure) $\boundigw \leq 1 + \frac{\beta}{m\wming\sqrt{\eta}}$.
\end{itemize}
Notice that the user activity does not appear explicitly in these upper-bounds on the normalized gradient because (1) the user objective weights $\pii$ by $\wi$ and (2) in the definition of item exposure, $\piij$ is also weighted by $\wi$. The normalization removes these weights.

}
{
\section{Additional experimental details}

\subsection{Training}\label{sec:xp_training}

For the \lastfm experiments, following \citet{do2021two}, the training was performed by randomly splitting the dataset into 3 splits $70\%/10\%/20\%$ of train/validation/test sets. The hyperparameters of the factorization\footnote{Using the Python library \texttt{Implicit} toolkit \url{https://github.com/benfred/implicit}.} are selected on the validation set by grid search. The number of latent factors is chosen in $[16, 32, 64, 128]$, the regularization in $[0.1, 1., 10., 20., 50.]$, and the confidence weighting parameter in $[0.1, 1., 10., 100.]$. The estimated preferences we use are the positive part of the resulting estimates.

\subsection{More details on \lastfm}\label{sec:xp:details:lastfm}

\begin{figure}[t]
    \centering
    \includegraphics[width=\linewidth]{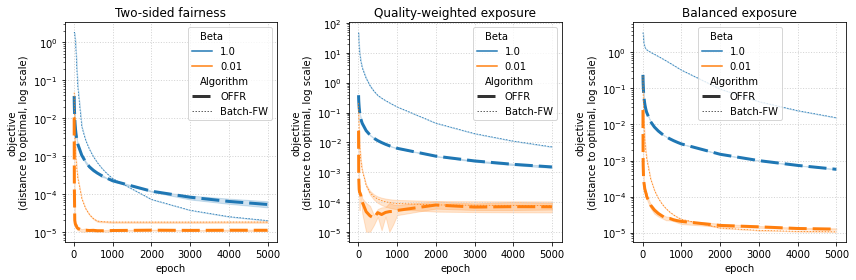}
    
    \includegraphics[width=\linewidth]{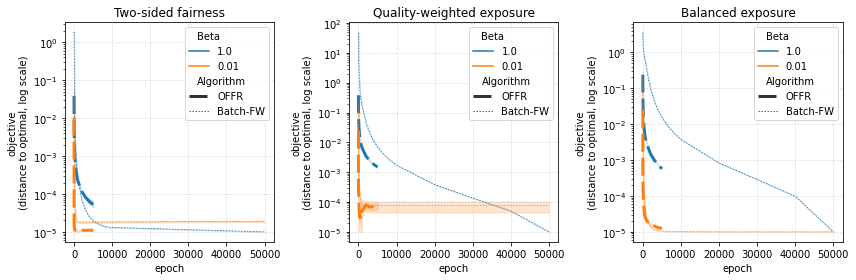}
    \caption{Convergence speed on \lastfm of \ofr compared to \batchfw on the three fairness objectives, for $\beta \in\{0.01, 1\}$ and $\eta=1$. As expected, they converge to the same values. \ofr was run for $5k$ epochs, while \batchfw was run for $50k$ epochs. We see that \ofr converges to the same objective function value as \batchfw as expected, up to some noise on quality-weighted exposure for small values of $\beta$.}
    \label{fig:convergence_details_lastfm}
\end{figure}

\begin{figure}[t]
    \centering
    \includegraphics[width=\linewidth]{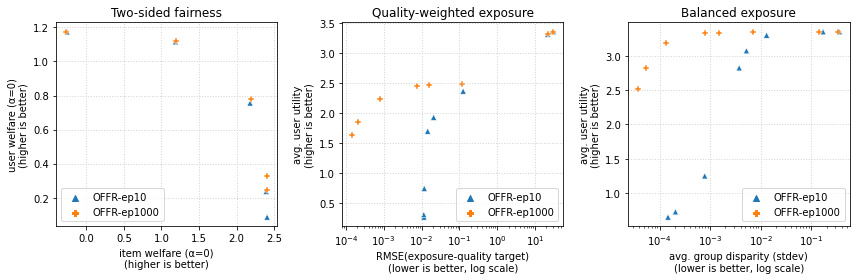}
    
    \includegraphics[width=\linewidth]{trade-offs_1.png}
    \caption{Comparison of the trade-offs obtained by varying $\beta$ for $\eta=0.01$ (top row) and $\eta=1$ (bottom row, repeating Fig.~\ref{fig:tradeoffs} for better visibility), for \ofr with $10$ and $1000$ epochs on \lastfm. }
    \label{fig:tradeoffs_details_lastfm}
\end{figure}

\begin{figure}[t]
    \centering
    \includegraphics[width=\linewidth]{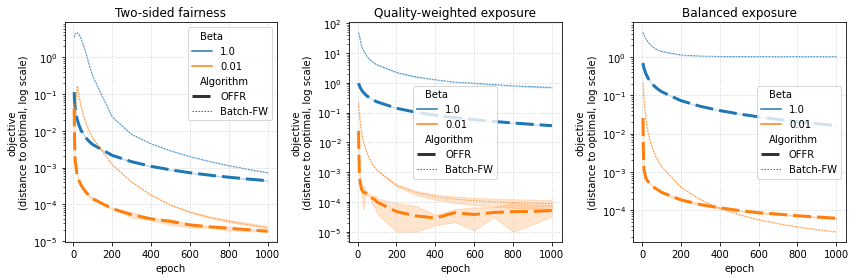}
    \caption{Convergence speed of \ofr compared to \batchfw on \lastfm  for $\eta=0.01$. The overall trends are similar as for $\eta=1$ in Fig.~\ref{fig:convergence}, except that \batchfw becomes more rapidly better than \ofr on the balanced exposure objective for small $\beta$.}
    \label{fig:convergence_eta_lastfm}
\end{figure}

In this section, we provide additional details regarding convergence, as well as the choice of $\eta$. 

\paragraph{Convergence} In Fig.~\ref{fig:convergence_details_lastfm} we give the results of the algorithms with more epochs than in Fig.~\ref{fig:convergence} in the main paper. The online algorithm was run for 5000 epochs, while the batch algorithm was run for $50k$ epochs (which was necessary to reach the convergence value for batch for large values of $\beta$). We observe of online on two-sided fairness and balanced exposure, the convergence on quality weighted exposure is more noisy and seems to oscillate around $10^{-4}/10^{-5}$ of the objective, but nonetheless converges to the desired value with much faster convergence as beta becomes large.

\paragraph{Changing $\eta$} In our objective functions, the main purpose of $\eta$ is to ensure that the objective functions are smooth. Note that fundamentally, we are looking from trade-offs between a user objective and a fairness objective by varying $\beta$. Different values of $\eta$ lead to different trade-offs for fixed $\beta$, but not necessarily different Pareto fronts when varying $\beta$ from very small to very large values. Nonetheless, as $\eta$ controls the curvature of the function, it is important for the convergence of both \batchfw (see e.g., the analysis in \citet{clarkson2010coresets} for more details in the convergence of batch Frank-Wolfe algorithms and the importance of the curvature). 

In Fig.~\ref{fig:tradeoffs_details_lastfm}, we show the trade-offs achieved with $\eta=0.01$ compared to $\eta=1$ as shown in the main paper, for the same values of $\beta\in\{10^{x}, x\in\{-3, -2, -1, 0, 1, 2\}\}$, as in Fig.~\ref{fig:tradeoffs}. 
% For two-sided fairness, we plot the values of the user and item objectives, which themselves depend on $\eta$. We give the plots for exhaustivity but they are not conclusive as they stand. Nonetheless, 
For both quality-weighted and balanced exposure, we observe that the smaller values of $\eta$ (top row) reaches better trade-offs than $\eta=1$ for this range of $\beta$, which may indicate that $\eta=0.01$ might be preferable in practice to $\eta=1$. 

In Fig.~\ref{fig:convergence_eta_lastfm}, we plot the convergence speed of \ofr and \batchfw for the first $1000$ epochs when $\eta=0.01$. comparing with Fig.~\ref{fig:convergence}, we observe that, as expected, both \ofr and \batchfw converge to their objective function slowlier overall. The relative convergence of \ofr compared to \batchfw follow similar trends as for $\eta=1$, with \ofr obtaining better values of the objective at the beginning, and \ofr being significantly better than \batchfw for large values of $\beta$. 

Interestingly though, compared to \batchfw, \ofr still converges relatively fast, and seems less affected by the larger curvature than \batchfw. This is coherent with the observation that \ofr converges faster than \batchfw at the beginning, especially for large values of $\beta$. While the exact reason why \ofr is less affected by large curvatures than \batchfw is an open problem, these results are promising considering the wide applicability of online Frank-Wolfe algortihms.

\subsection{Results on Movielens data}\label{sec:xp_mlm}

To show the results on another dataset, we replicate the experiments on the MovieLens-1m dataset (\mlm) \cite{harper2015movielens}. The dataset contains ratings of movies, with $~3k$ users and $4k$ items, as well as a gender attribute. We transform the rating matrix as a binary problem by considering ratings $\geq 3$ as positive examples, and setting all other entries as 0. This makes the problem similar to \lastfm, and we then follow exactly the same as for \lastfm, including sampling and learning the matrix factorization of user values.

\begin{figure}[t]
    \centering
    \includegraphics[width=\linewidth]{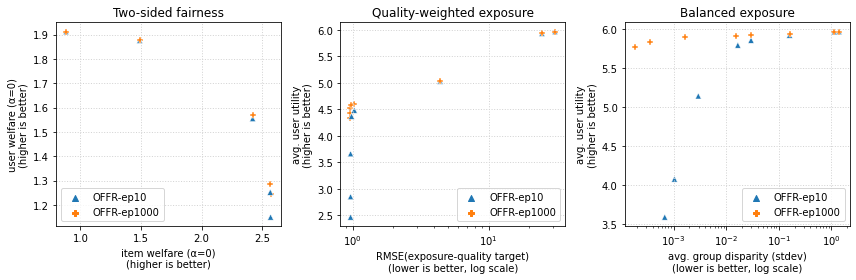}
    \caption{Trade-offs in terms of user objective ($y$-axis) and item fairness ($x$-axis) for \mlm. The observations are similar than on \lastfm.}
    \label{fig:tradeoffs_ml1m}
\end{figure}

\begin{figure}[t]
    \centering
    \includegraphics[width=\linewidth]{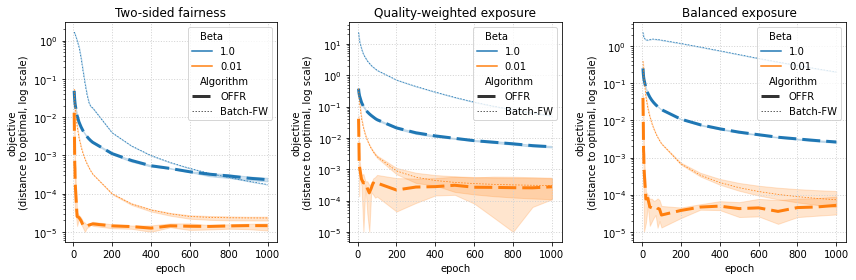}
    
    \includegraphics[width=\linewidth]{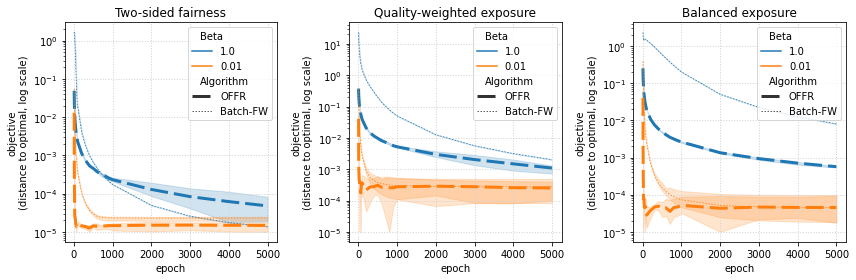}
    
    \includegraphics[width=\linewidth]{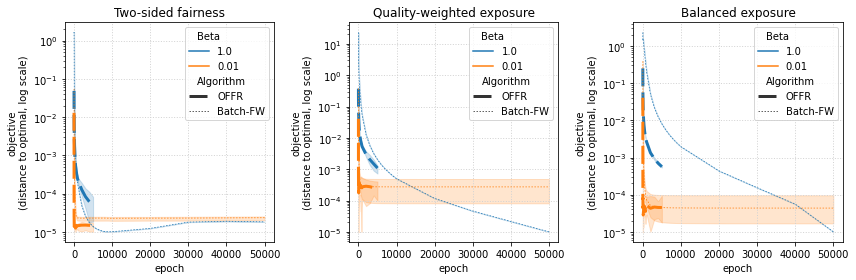}
    \caption{Convergence speed of \ofr compared to \batchfw on \mlm,  for $\beta \in\{0.01, 1\}$ and $\eta=1$ for the first $1k$ epochs (top row), $5k$ epochs (middle row) and $50k$ epochs (bottom row, note that only \batchfw was ran for $50k$ epochs.}
    \label{fig:convergence_ml1m}
\end{figure}

\begin{figure}[t]
    \centering
    \includegraphics[width=0.48\linewidth]{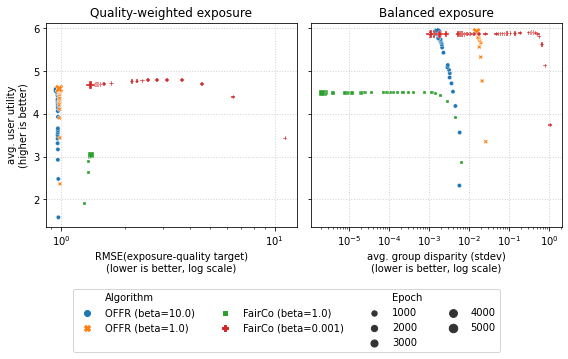}
    \hfill
    \includegraphics[width=0.48\linewidth]{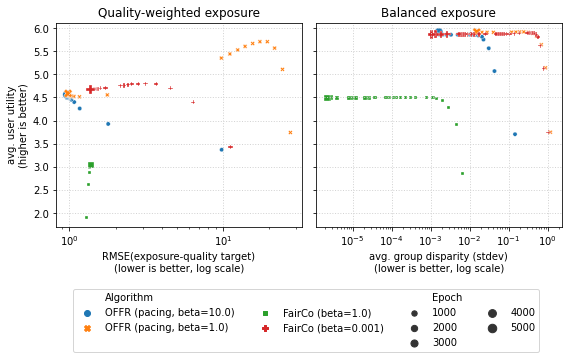}
    \caption{Convergence of \ofr compared to the dynamics of \fairco on \mlm (left) without the pacing heuristic (right) with the pacing heuristic.}
    \label{fig:fairco_ml1m}
\end{figure}

\paragraph{Qualitative trade-offs} The trade-offs obtained by varying $\beta\in\{10^{x}, x\in\{-3, -2, -1, 0, 1, 2\}\}$ are shown in Fig.~\ref{fig:tradeoffs_ml1m}. The plots are qualitatively very similar to those of
\lastfm.

\paragraph{convergence} The convergence of \ofr compared to \batchfw are shown in Fig.~\ref{fig:convergence_ml1m}. They again look very similar to the plots of \lastfm, with \ofr being better than \batchfw at the beginning, and reaching the same values at convergence than \batchfw, up to noise when the objective is close to the optimum.

\paragraph{Comparison to \fairco} Fig.~\ref{fig:fairco_ml1m} shows the comparison to \fairco. The observations are once again similar to those on \lastfm, with the main trends exacerbated (\ofr without pacing heuristic converging in fairness objective extremely fast and taking time to converge in user objective, while \fairco keeping a high value of user objective as much as possible while consistently reducing the item objective.  
}
\end{document}